\documentclass[%
 reprint,
superscriptaddress,
dvipdfmx,
 amsmath,amssymb,
 aps,
]{revtex4-2}

\usepackage[utf8]{inputenc}
\usepackage{graphicx}
\usepackage{color}
\usepackage{braket}
\usepackage{physics}
\usepackage{comment}
\usepackage{mathrsfs}
\usepackage{enumerate}
\usepackage{here}

\usepackage[version=4]{mhchem}
\usepackage[dvipdfmx,breaklinks=true]{hyperref}

\usepackage{amsthm}
\theoremstyle{definition}
\newtheorem{theorem}{Theorem}[]
\newtheorem{lemma}[theorem]{Lemma}

\usepackage{footmisc}

\usepackage{mathtools} %
\mathtoolsset{showonlyrefs,showmanualtags}

\usepackage[ruled,vlined]{algorithm2e}
\usepackage{amsthm}


\begin{document}
\title{Analytical construction of $(n, n-1)$ quantum random access codes saturating the conjectured bound}

\author{Takayuki Suzuki}
\affiliation{Technology Strategy Division, SCSK Corporation, Koto, Tokyo 135-8110, Japan}

\begin{abstract}
Quantum Random Access Codes (QRACs) embody the fundamental trade-off between the compressibility of information into limited quantum resources and the accessibility of that information, serving as a cornerstone of quantum communication and computation.
In particular, the $(n, n-1)$-QRACs, which encode $n$ bits of classical information into $n-1$ qubits, provides an ideal theoretical model for verifying quantum advantage in high-dimensional spaces; however, the analytical derivation of optimal codes for general $n$ has remained an open problem.
In this paper, we establish an analytical construction method for $(n, n-1)$-QRACs by using an explicit operator formalism.
We prove that this construction strictly achieves the numerically conjectured upper bound of the average success probability, $\mathcal{P} = 1/2 + \sqrt{(n-1)/n}/2$, for all $n$.
Furthermore, we present a systematic algorithm to decompose the derived optimal POVM into standard quantum gates. Since the resulting decoding circuit consists solely of interactions between adjacent qubits, it can be implemented with a circuit depth of $O(n)$ even under linear connectivity constraints.
Additionally, we analyze the high-dimensional limit and demonstrate that while the non-commutativity of measurements is suppressed, an information-theoretic gap of $O(\log n)$ from the Holevo bound inevitably arises for symmetric encoding.
This study not only provides a scalable implementation method for high-dimensional quantum information processing but also offers new insights into the mathematical structure at the quantum-classical boundary.
\end{abstract}

\maketitle

\section{Introduction}\label{Sec:Introduction}

Understanding the fundamental limits of information processing in quantum mechanical systems is a central theme in quantum information theory.
While the Holevo bound~\cite{holevo1973} demonstrated that a system of $n$ qubits cannot be used to retrieve more than $n$ bits of classical information, the selective readout of that information exhibits non-trivial properties that contradict classical intuition.
Quantum Random Access Codes (QRACs), introduced by Ambainis et al.~\cite{ambainis1999, ambainis2002}, are a concept that embodies the trade-off between the compressibility of information into limited quantum resources and its retrievability.
In an $(n, m)$-QRAC, $n$ bits of classical information are encoded into $m (< n)$ qubits, with the requirement that any arbitrary bit from the original string can be recovered with a probability strictly higher than the classical limit.

QRACs are not merely of theoretical interest; they serve as a critical component in a wide range of areas within quantum communication and computation.
In the field of computational complexity theory, they have been applied to the analysis of communication complexity, the establishment of limits~\cite{klauck2000, aaronson2004}, and the derivation of lower bounds for Locally Decodable Codes~\cite{kerenidis2003}.
Furthermore, from the perspectives of quantum foundations and security, QRACs provide the basis for semi-device-independent protocols such as random number generation and Quantum Key Distribution~\cite{pawlowski2011, li2011}. They are also used in dimension witnesses and self-testing in high-dimensional systems~\cite{brunner2013, tavakoli2021, farkas2019}, as well as in analyzing the performance and verifying the optimality of Mutually Unbiased Bases (MUBs)~\cite{miao2022}.
In these applications, identifying the optimal success probability and the strategy that achieves this bound is crucial for establishing security proofs and performance evaluations of the protocols.

Despite their wide applicability, constructing optimal QRACs for general parameters remains a challenging task.
While analytical bounds are known for high-compression cases such as the $(n, 1)$-QRAC~\cite{ambainis1999, ambainis2002, nayak1999}, determining the optimal encoding states and decoding measurements for general $(n, m)$-QRACs involves complex optimization problems.
Although recent studies have derived analytical upper bounds for general QRACs~\cite{farkas2025}, the specific construction methods to achieve these bounds are not trivial. Indeed, it has been shown that highly symmetric measurements, such as MUBs, are not always optimal~\cite{miao2022}, making the identification of the optimal geometric structure difficult. Consequently, many existing studies have relied on numerical optimization methods such as Semidefinite Programming (SDP) and the See-saw method~\cite{tavakoli2024, navascues2015}.

In contrast to high-compression cases like $(n, 1)$, the success probability in the class of $(n, n-1)$-QRACs, which represents less information loss, asymptotically approaches $1$ as $n \to \infty$.
While deriving optimal constructions for general parameters is extremely difficult, this class exhibits remarkably simple behavior in numerical calculations and is one of the few classes expected to allow for analytical discussion in high-dimensional quantum systems.
In fact, Imamichi et al.~\cite{imamichi2018} performed a heuristic search using SDP for the $(3, 2)$ case and identified the analytical expression of the codes. Furthermore, numerical results strongly suggest that the success probability follows $\mathcal{P} = 1/2 + \sqrt{(n-1)/n}/2$ for general $n$~\cite{mancinska2022}. Although this class has also attracted attention in the context of optimization~\cite{kondo2025}, a concrete code construction that strictly achieves the numerically suggested bound for general $n$ has yet to be discovered.

In this paper, we bridge this gap by providing a general analytical construction for $(n, n-1)$-QRACs.
We establish a construction method using an explicit operator formalism and rigorously demonstrate its validity. Using this method, we prove that the numerically conjectured bound is achievable for all $n$.
This establishes the class as a valuable theoretical model for understanding, without approximation, how quantum advantage is demonstrated for all finite $n$ and how the bound is formed.
Furthermore, from a practical standpoint, we present a specific quantum circuit that decomposes the derived optimal measurements into standard gate operations and computational basis measurements.
The proposed decoding circuit consists solely of interactions between logically adjacent qubits.
Therefore, even on hardware with linear connectivity, such as superconducting qubits, where all-to-all connectivity is not assumed, it can be implemented with a circuit depth of $O(n)$ without the overhead of SWAP gates.
This demonstrates that our construction offers concrete practical utility beyond a mere theoretical proof of existence.
Additionally, based on this construction, we analyze the asymptotic behavior in the high-dimensional limit ($n \to \infty$).
We reveal that while state disturbance caused by the non-commutativity of observables is suppressed by $O(1/n)$, an information-theoretic gap of $O(\log n)$ from the Holevo bound inevitably arises as a physical cost of symmetric encoding using pure states.

\section{Quantum Random Access Codes}\label{sec:qrac}
Alice possesses an $n$-bit classical string $x = (x_1, x_2, \dots, x_n) \in \{0, 1\}^n$, where each bit is chosen independently and uniformly at random.
Alice encodes this data into a quantum state $\rho_x$ using $m$ qubits and sends it to Bob.
To retrieve the $k$-th bit $x_k$, where the index $k \in [n] := \{1, 2, \dots, n\}$ is chosen uniformly at random, Bob performs a measurement described by a POVM $\{M_{0|k}, M_{1|k}\}$ and outputs a guess $b \in \{0, 1\}$.
The optimal average success probability of an $(n,m)$-Quantum Random Access Code ($(n,m)$-QRAC), denoted by $\mathcal{P}_{n,m}^{Q}$, is defined by maximizing over all possible encoding states $\rho_x$ and decoding POVMs $\{M_{b|k}\}$ as follows:
\begin{align}
    \mathcal{P}_{n,m}^{Q} := \max_{\rho_{x}, \{M_{b|k}\}} \frac{1}{n 2^{n}} \sum_{x \in \{0, 1\}^{n}} \sum_{k=1}^{n} \Tr\qty( \rho_x M_{x_{k}|k})
\end{align}

While the exact value of $\mathcal{P}_{n,m}^{Q}$ for general $(n, m)$ remains an open problem, Man\v{c}inska and Storgaard~\cite{mancinska2022} derived the following analytical upper bound:
\begin{align}
    \mathcal{P}_{n,m}^{Q} \leq \frac{1}{2} + \frac{1}{2}\sqrt{\frac{2^{m-1}}{n}}. \label{eq:ms_bound}
\end{align}
However, for the case of $m=n-1$ considered in this paper, this bound exceeds the trivial value of $1$ for $n \ge 3$, and thus provides no non-trivial information constraining the behavior of the average success probability.
On the other hand, the following tighter bound has been conjectured based on numerical calculations~\cite{mancinska2022}:
\begin{align}
    \mathcal{P}_{n,m}^{Q} \stackrel{?}{\leq} \frac{1}{2} + \frac{1}{2}\sqrt{\frac{m}{n}} =: \mathcal{P}_{n,m}^{\mathrm{conj}}. \label{eq:conj_bound}
\end{align}
The main objective of this paper is to construct explicit encoding states and decoding measurements that strictly achieve this conjectured bound $\mathcal{P}_{n,n-1}^{\mathrm{conj}}$ for the $(n, n-1)$ case.

\section{$(n,n-1)$-QRACs}

In this section, we focus on the $(n, n-1)$-QRAC scenario introduced in Sec.~\ref{sec:qrac}.
Specifically, we consider encoding the classical string $x \in \{0, 1\}^n$ into an $(n-1)$-qubit pure state $\ket{\psi_x}$.
Our goal is to explicitly construct the encoding states $\ket{\psi_x}$ and the corresponding decoding POVMs $\{M_{0|k}, M_{1|k}\}$ that satisfy $\mathcal{P}_{n,n-1}^{Q}= \mathcal{P}_{n,n-1}^{\mathrm{conj}}$.

\subsection{Encoding Procedure}

We classify the input bit strings $x \in \{0, 1\}^n$ into the following two sets based on the parity of their Hamming weight $w(x) = \sum_{i=1}^n x_i$:
\begin{align}
    E_n &:= \{ x \in \{0, 1\}^n \mid w(x) \equiv 0 \pmod 2 \}, \\
    O_n &:= \{ x \in \{0, 1\}^n \mid w(x) \equiv 1 \pmod 2 \}.
\end{align}
Note that $|E_n| = |O_n| = 2^{n-1}$. We define the encoding map $x \mapsto \ket{\psi_x}$ into the $m=n-1$ qubit system depending on whether the input $x$ belongs to $E_n$ or $O_n$.

If the input $x$ has even parity ($x \in E_n$), the state $\ket{\psi_x}$ is mapped to the computational basis $\{\ket{z}\}_{z \in \{0,1\}^{n-1}}$.
For $x \in E_n$, since the last bit $x_n$ is uniquely determined once the first $n-1$ bits are fixed, we adopt the first $n-1$ bits as the index for the qubits:
\begin{align}
    \ket{\psi_x} = \ket{x_1 x_2 \dots x_{n-1}} \quad (\text{for } x \in E_n).
\end{align}

If the input $x$ has odd parity ($x \in O_n$), the state $\ket{\psi_x}$ is defined as a superposition of neighboring even parity states ($x' \in E_n$) with a Hamming distance of $1$. In the following, to avoid confusion in notation, we explicitly distinguish elements of the odd parity set as $y \in O_n$ and elements of the even parity set as $x \in E_n$.
\begin{align}
    \ket{\psi_y} = \frac{1}{\sqrt{n}} \sum_{x \in N(y)} \sigma_{yx} \ket{\psi_x} \quad (\text{for } y \in O_n),
\end{align}
where $N(y) = \{x \in E_n \mid d_H(x, y) = 1\}$ denotes the neighborhood set of $y$ ($d_H(x, y)$ represents the Hamming distance between $x$ and $y$).
The coefficients $\sigma_{yx} \in \{+1, -1\}$ are sign functions chosen to satisfy the orthogonality condition $\braket{\psi_y}{\psi_{y'}} = \delta_{yy'}$ between any states within the odd parity group. This specific sign structure $\sigma_{yx}$ can be determined using the following real symmetric matrix $A_n$ by setting $\sigma_{yx} = (A_n)_{yx}$:
\begin{align}
    A_n = \sum_{l=1}^{n} Z^{\otimes (l-1)} \otimes X \otimes I_{n-l}.
\end{align}
The fact that this matrix $A_n$ satisfies the properties necessary for the orthogonality condition (adjacency and $A_n^2 = nI_n$) is proven in Appendix~\ref{app:An_prop}.
By setting $\sigma_{yx} = (A_n)_{yx}$, it is guaranteed that the encoding states $\ket{\psi_y}$ satisfy the orthogonality condition. Indeed,
\begin{align}
    \braket{\psi_y}{\psi_{y'}} &= \frac{1}{n} \sum_{x \in N(y)} \sum_{x' \in N(y')} \sigma_{y x}\sigma_{y' x'} \braket{\psi_{x}}{\psi_{x'}} \\
    &= \frac{1}{n} \sum_{x \in N(y)\cap N(y')} \sigma_{y x}\sigma_{y' x} \\
    &= \frac{1}{n} \sum_{x \in E_n} (A_n)_{yx}(A_n)_{y'x} \\
    &= \delta_{yy'}.
\end{align}

\subsection{Decoding via Optimal POVMs}

We construct the optimal POVM for Bob to recover the $k$-th bit $x_k$ where $k \in [n]$.
First, for each bit value $b \in \{0, 1\}$, we define the following subsets:
\begin{align}
    E_{k, b} &:= \{ x \in E_n \mid x_k = b \}, \\
    O_{k, b} &:= \{ y \in O_n \mid y_k = b \}.
\end{align}
We define the projection operators onto the subspaces corresponding to these sets as $P_{k,b}^{(E)}$ and $P_{k,b}^{(O)}$, respectively.
\begin{align}
    P_{k,b}^{(E)} &:= \sum_{x \in E_{k,b}} \ket{\psi_x}\bra{\psi_x}, \\
    P_{k,b}^{(O)} &:= \sum_{y \in O_{k,b}} \ket{\psi_y}\bra{\psi_y}.
\end{align}

To maximize the average success probability, Bob measures the eigenspace of the following operator corresponding to the maximum eigenvalue:
\begin{align}
    S_{k,b} := P_{k,b}^{(E)} + P_{k,b}^{(O)}.
\end{align}
Regarding the spectral structure of this operator, the following lemma holds.

\begin{lemma}[Spectrum of Sum of Projectors] \label{lem:spectrum_projectors}
    Let $P$ and $Q$ be two projection operators satisfying $P Q P = \mu P$ and $Q P Q = \mu Q$ (where $0 \le \mu \le 1$). Then, the eigenvalues of the sum $S = P + Q$ are $1 \pm \sqrt{\mu}$ and $0$.
\end{lemma}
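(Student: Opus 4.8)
The plan is to analyze the operator $S = P+Q$ by decomposing the Hilbert space into subspaces invariant under both $P$ and $Q$. First I would observe that the two conditions $PQP = \mu P$ and $QPQ = \mu Q$ are exactly the algebraic relations governing a pair of projectors whose ``principal angles'' are all equal, so the whole structure should reduce to a direct sum of one- and two-dimensional blocks. Concretely, I would work inside the range of $P$: for any $\ket{v} \in \mathrm{ran}(P)$ the condition $PQP\ket{v} = \mu\ket{v}$ says $\ket{v}$ is an eigenvector of the positive operator $PQP|_{\mathrm{ran}(P)}$ with eigenvalue $\mu$, and similarly on $\mathrm{ran}(Q)$.

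The key computational step is to show $S$ satisfies a polynomial identity on the relevant subspace. I would compute $S^2 = P + Q + PQ + QP$ and then
\begin{align}
S^3 &= S\cdot S^2 = (P+Q)(P+Q+PQ+QP) \\
&= P + PQ + PQP + PQP\,Q \text{-type terms} + \dots,
\end{align}
using $P^2=P$, $Q^2=Q$, and the hypotheses to collapse the cubic strings $PQP=\mu P$, $QPQ=\mu Q$, $PQPQ = \mu PQ$, $QPQP = \mu QP$. This should yield the clean relation $S^3 = (1+\mu)S^2 - \mu S$, equivalently $S\bigl(S^2 - (1+\mu)S + \mu I\bigr) = 0$, so every eigenvalue of $S$ is a root of $\lambda(\lambda^2 - (1+\mu)\lambda + \mu) = \lambda(\lambda - 1 - \sqrt{\mu}\,)(\lambda - 1 + \sqrt{\mu}\,)$ — wait, I should double-check: the roots of $\lambda^2 - (1+\mu)\lambda + \mu$ are $\lambda = 1$ and $\lambda = \mu$, not $1\pm\sqrt\mu$. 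So the naive cubic is not quite the right identity; the correct approach is to restrict to the two-dimensional blocks directly rather than rely on a global polynomial.

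So the cleaner route, which I would actually carry out, is the explicit $2\times 2$ block decomposition. On $\mathrm{ran}(P)$, choose an orthonormal eigenbasis of $PQP$; eigenvalue $\mu$ vectors $\ket{p}$ give rise to a partner $\ket{q} := \tfrac{1}{\sqrt\mu} Q\ket{p} \in \mathrm{ran}(Q)$ (assuming $\mu > 0$), and one checks $\braket{p}{q} = \sqrt\mu$ and that $\mathrm{span}\{\ket{p},\ket{q}\}$ is invariant under both projectors. In this plane, writing $P$ and $Q$ as rank-one projectors onto unit vectors at relative overlap $\sqrt\mu$, the matrix of $S = P+Q$ has trace $2$ and determinant $1 - \mu$, hence eigenvalues $1 \pm \sqrt{\mu}$. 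Everything orthogonal to all such planes lies in $\ker P \cap \ker Q$ (or in the $\mu=0$ kernels), contributing eigenvalue $0$. The main obstacle is the bookkeeping for the degenerate cases — $\mu = 0$ (where $P$ and $Q$ have orthogonal ranges, giving only eigenvalues $1$ and $0$) and $\mu = 1$ (where $P = Q$, giving eigenvalues $2$ and $0$) — and confirming that the partner vectors $\ket{q}$ are mutually orthogonal across distinct $\ket{p}$, so that the block decomposition is genuinely a direct sum exhausting the space; this follows from $\braket{p}{p'} = \tfrac1\mu\bra{p}QPQ\ket{p'}$-type manipulations using the hypotheses, but needs to be stated carefully.
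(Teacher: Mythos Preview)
Your block-decomposition argument is correct in outline and would establish the lemma, but it takes a genuinely different route from the paper. The paper's proof is a short direct eigenvector computation: take an eigenvector $\ket{\lambda}$ of $S$, apply $P$ to both sides of $S\ket{\lambda} = \lambda\ket{\lambda}$ to get $PQ\ket{\lambda} = (\lambda-1)P\ket{\lambda}$, apply $Q$ similarly to get $QP\ket{\lambda} = (\lambda-1)Q\ket{\lambda}$, and then apply $P$ once more to obtain $\mu P\ket{\lambda} = PQP\ket{\lambda} = (\lambda-1)PQ\ket{\lambda} = (\lambda-1)^2 P\ket{\lambda}$. This forces $(\lambda-1)^2 = \mu$ whenever $P\ket{\lambda} \neq 0$, and the residual case $P\ket{\lambda} = 0$ reduces to $Q\ket{\lambda} = \lambda\ket{\lambda}$, so $\lambda \in \{0,1\}$. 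No invariant-plane construction or orthogonality bookkeeping is required. Your Jordan-block approach gives more structural information (you actually build the eigenvectors and exhibit the invariant planes), at the cost of the extra bookkeeping you flag at the end; the paper's argument trades that insight for brevity.

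It is also worth pointing out that the polynomial-identity approach you abandoned was perfectly viable --- the error was purely computational, not strategic. Expanding carefully,
\[
S^3 = (P+Q)S^2 = S^2 + (P+Q)(PQ+QP) = S^2 + (PQ+QP) + (PQP+QPQ) = 2S^2 - S + \mu S,
\]
so $S\bigl(S^2 - 2S + (1-\mu)I\bigr) = 0$, and the quadratic factor has roots $1 \pm \sqrt{\mu}$, exactly as required. Your guessed identity $S^3 = (1+\mu)S^2 - \mu S$ was simply wrong in the coefficients. Had you carried out the expansion, this would have been the shortest of the three arguments and closest in spirit to the paper's, since the paper's eigenvector manipulation is essentially the same cubic relation applied pointwise.
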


See Appendix~\ref{app:spectrum} for the proof.
In our construction, the defined pair of projection operators $P=P_{k,b}^{(O)}$ and $Q=P_{k,b}^{(E)}$ satisfy the preconditions of Lemma~\ref{lem:spectrum_projectors}, i.e., $PQP=\mu P$ and $Q P Q = \mu Q$ (see Appendix~\ref{app:projector_condition} for detailed calculations). The parameter $\mu$ is calculated as follows:
\begin{align}
    \mu &= \bra{\psi_y} P_{k, y_k}^{(E)} \ket{\psi_y} \quad (\text{for any } y \in O_{k, b}) \\
    &= \frac{1}{n} \sum_{x \in N(y)} \bra{\psi_x} P_{k, y_k}^{(E)} \ket{\psi_x}.
\end{align}
Note that any element $x$ in $N(y)$ can be uniquely expressed as $x = y \oplus e_l$ (bit-flip at the $l$-th position) for some $l \in [n]$.
In this case, the term $\bra{\psi_{y \oplus e_l}} P_{k, y_k}^{(E)} \ket{\psi_{y \oplus e_l}}$ becomes $1$ only if $l \neq k$, and becomes $0$ if $l=k$ due to orthogonality. Therefore,
\begin{align}
    \mu &= \frac{1}{n} \sum_{l=1}^n \bra{\psi_{y \oplus e_l}} P_{k, y_k}^{(E)} \ket{\psi_{y \oplus e_l}} \\
    &= \frac{1}{n} \sum_{l=1}^n  \delta_{(y \oplus e_l)_k, \, y_k}\\
    &= \frac{n-1}{n}.
\end{align}

Based on this, the spectral decomposition of $S_{k,b}$ can be written as:
\begin{align}
    S_{k,b} = (1+\sqrt{\mu}) \Pi_{k,b}^{(+)} + (1-\sqrt{\mu}) \Pi_{k,b}^{(-)},
\end{align}
where $\Pi_{k,b}^{(\pm)}$ are the projection operators corresponding to the eigenvalues $1 \pm \sqrt{\mu}$.
Using this, we define the POVM element $M_{b|k}$ for detecting bit $b$ as the projection operator corresponding to the maximum eigenvalue:
\begin{align}
    M_{b|k} := \Pi_{k,b}^{(+)}.
\end{align}
Noting the relation
\begin{align}
    \Pi_{k,b}^{(+)} = \frac{1}{2\sqrt{\mu}} \qty( S_{k,b} - (1-\sqrt{\mu})I_{n-1} ) \label{eq:pi_plus}
\end{align}
and the fact that $S_{k,0} + S_{k,1} = 2I_{n-1}$, it follows that
\begin{align}
    M_{0|k} + M_{1|k} = I_{n-1},
\end{align}
confirming that $\{ M_{0|k}, M_{1|k} \}$ constitutes a POVM.

\subsection{Calculation of the Average Success Probability}

We calculate the average success probability $\mathcal{P}^Q_{n, n-1}$ for the $(n, n-1)$-QRACs.
\begin{align}
    \mathcal{P}^Q_{n, n-1} &= \frac{1}{n 2^n} \sum_{x \in \{0, 1\}^n} \sum_{k=1}^n \bra{\psi_x} M_{x_k|k} \ket{\psi_x}.
\end{align}
Using the relations Eq.~\eqref{eq:pi_plus}, we have
\begin{align}
    \bra{\psi_x} M_{x_k|k} \ket{\psi_x} &= \frac{1}{2\sqrt{\mu}} \qty( \expval{S_{k, x_k}}{\psi_x} - (1-\sqrt{\mu}) ).
\end{align}
Note that the expectation value $\expval{S_{k, x_k}}{\psi_x}$ takes a constant value $1+\mu$ for any input $x$ and index $k$ as shown below. By definition, $S_{k, x_k} = P_{k, x_k}^{(E)} + P_{k, x_k}^{(O)}$. The expectation value with respect to the projection operator corresponding to the parity of $x$ (i.e., $P^{(E)}$ if $x \in E_n$, and $P^{(O)}$ if $x \in O_n$) is exactly 1, since the state $\ket{\psi_x}$ lies within that subspace. On the other hand, the expectation value with respect to the other projection operator corresponding to the opposite parity represents the overlap with the neighboring states. Among the $n$ neighbors of $x$, exactly $n-1$ neighbors preserve the value of the $k$-th bit where bit-flips occur at indices $l \neq k$. Each of these neighbors contributes $1/n$ to the overlap. Thus, this term equals $(n-1)/n = \mu$. Consequently, the sum of the expectation values is $1+\mu$ in all cases. Therefore, the average success probability is determined as:
\begin{align}
    \mathcal{P}^Q_{n, n-1} &= \frac{1}{2} (1 + \sqrt{\mu})\\
    &= \frac{1}{2} \qty( 1 + \sqrt{\frac{n-1}{n}} ) = \mathcal{P}_{n,n-1}^{\mathrm{conj}},
\end{align}
yielding a result that matches the conjectured bound Eq.~\eqref{eq:conj_bound}.

\section{Optimal Observables and Decoding Circuits} \label{sec:optimal_obs}
In this section, we explicitly construct the observables corresponding to the optimal POVMs derived in the previous section using Pauli operator strings and discuss their quantum circuit implementation.

\subsection{Structure of Optimal Observables}

The optimal observable $O_k$ for any $k \in [n]$ is defined using the POVM elements as $O_k = M_{0|k} - M_{1|k}$.
Through analysis using the properties of the matrix $A_n$ (see Appendix~\ref{app:derivation_Ok} for the derivation), this can be explicitly described in the following form:

\begin{align}
    O_k &= \sqrt{\frac{n-1}{n}} \mathcal{E}_k + \frac{1}{\sqrt{n(n-1)}} \mathcal{K}_k,
\end{align}
where $\mathcal{E}_k$ represents the diagonal components and $\mathcal{K}_k$ represents the off-diagonal components.
Specifically, they are defined as follows:
\begin{widetext}
\begin{align}
     \mathcal{E}_k &=\begin{dcases}
         Z_k&k=1,\cdots ,n-1\\
         \prod_{j=1}^{n-1} Z_j&k=n
     \end{dcases}\\
     \mathcal{K}_k &=\begin{dcases}
         \sum_{l=k+1}^{n-1}  X_k Z_{k+1} \dots Z_{l-1} X_l+  X_k Z_{k+1} \dots Z_{n-1} +\sum_{l=1}^{k-1}  Y_lZ_{l+1} \dots Z_{k-1} Y_k&k=1,\cdots ,n-1\\
         -\sum_{l=1}^{n-1}   Z_1Z_{2} \dots Z_{l-1}X_l&k=n
     \end{dcases}
\end{align}
\end{widetext}
Note that this $O_k$ can be described as a linear combination of $n$ Pauli strings.

\subsection{Decomposition into Pauli Strings}

In this subsection, we provide a systematic method for constructing a unitary circuit $U$ that transforms the optimal observable $O_k$ into a form $\mathcal{E}_k$ measurable in the computational basis. That is, we construct $U$ satisfying the following condition:
\begin{align}
    U O_k U^\dagger = \mathcal{E}_k
\end{align}

First, we rearrange the observable $O_k$ as a linear combination of $n$ Pauli operator strings $W_j$:
\begin{align}
    O_k = \sum_{j=1}^{n} c_j W_j,
\end{align}
where the Pauli operator string $W_j$ corresponding to index $j$ and the expansion coefficient $c_j$ are defined as follows. First, for the case $k \neq n$:
\begin{itemize}
    \item \textbf{Region I ($1 \le j < k$):} 
    \begin{align}
        W_j = Y_j Z_{j+1} \dots Z_{k-1} Y_k, \quad c_j = \epsilon
    \end{align}
    \item \textbf{Region II ($j = k$):}
    \begin{align}
        W_k = Z_k, \quad c_k = \sqrt{\frac{n-1}{n}}
    \end{align}
    \item \textbf{Region III ($k < j \le n$):}
    \begin{align}
        W_j = \begin{dcases}
            X_k Z_{k+1} \dots Z_{j-1} X_j&j<n\\
            X_k Z_{k+1} \dots Z_{n-1} &j=n
        \end{dcases}, \quad c_j = \epsilon
    \end{align},
\end{itemize}
where $\epsilon := 1/\sqrt{n(n-1)}$, and the coefficients satisfy $\sum c_j^2 = 1$. Next, for the case $k=n$:
\begin{itemize}
    \item \textbf{Region I ($1 \le j < n$):} 
    \begin{align}
        W_j = -  Z_1Z_{2} \dots Z_{j-1}X_j, \quad c_j = \epsilon
    \end{align}
    \item \textbf{Region II ($j = n$):}
    \begin{align}
        W_n = \prod_{j=1}^{n-1} Z_j, \quad c_n = \sqrt{\frac{n-1}{n}}
    \end{align}
\end{itemize}
Note that all $W_j$ are mutually anti-commutative.

\subsection{Circuit Construction}

To transform the operator $O_k$ into $W_k$, we define the unitary operator $U$ as the product of a left unitary operator ($U_L$) and a right unitary operator ($U_R$):
\begin{align}
    U = U_R U_L = (R_k \cdots R_{n-1}) (R_{k-1} \cdots R_1)
\end{align}
The generator $G_m$ and rotation angle $\theta_m$ for the rotation $R_m = \exp\qty( -i \theta_m G_m/2 )$ at each step $m$ are defined as follows:

\begin{itemize}
    \item Left Contraction ($1 \le m < k$): Sequentially transfers coefficients from $W_m$ to $W_{m+1}$, finally transforming into $W_k$.
    \begin{align}
        G_m &= i W_{m+1} W_m \\
        \theta_m &= \begin{dcases}
            \arctan(\sqrt{m}) & (m < k-1) \\
            \arctan\qty( \frac{\sqrt{k-1}\epsilon}{c_k} ) & (m = k-1)
        \end{dcases}
    \end{align}
    \item Right Contraction ($k \le m < n$): Sequentially transfers coefficients from $W_{m+1}$ to $W_m$, finally transforming into $W_k$.
    \begin{align}
        G_m &= i W_m W_{m+1} \\
        \theta_m &= \begin{dcases}
            \arctan(\sqrt{n-m}) & (m > k) \\
            \arctan\qty( \frac{\sqrt{n-k}\epsilon}{\sqrt{(k-1)\epsilon^2 + c_k^2}} ) & (m = k)
        \end{dcases}
    \end{align}
\end{itemize}
Note that the generators $G_m$ involve at most two-body interactions (see Lemma~\ref{lem:locality} in Appendix~\ref{app:locality} for details).

The correctness of the constructed circuit is established by the following theorem.

\begin{theorem}[Diagonalization of Optimal Observable]
    The proposed unitary circuit $U = U_R U_L$ diagonalizes the optimal observable $O_k$ into the observable $\mathcal{E}_k$ measurable in the computational basis:
    \begin{align}
        U O_k U^\dagger = \mathcal{E}_k.
    \end{align}
\end{theorem}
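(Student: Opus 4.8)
The plan is to exploit the Clifford-algebra structure of the decomposition $O_k = \sum_{j=1}^n c_j W_j$: by hypothesis the strings $W_j$ are Hermitian, satisfy $W_j^2 = I$, and mutually anticommute, so for any relevant pair the generator $G = iW_aW_b$ is again Hermitian with $G^2 = I$, and $R = e^{-i\theta G/2} = \cos(\theta/2)I - i\sin(\theta/2)G$. The first step is to establish a ``Givens rotation'' lemma for conjugation by such an $R$. Using only the anticommutation relations one gets $GW_a = -W_aG$ and $GW_b = -W_bG$, while $GW_j = W_jG$ for every $j \ne a,b$ (since such $W_j$ anticommutes with both $W_a$ and $W_b$); combined with $G^2 = I$, this yields $RCR^\dagger = \cos\theta\,C + i\sin\theta\,CG$ for any $C$ anticommuting with $G$. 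Hence conjugation by $R_m$ acts as a planar rotation through $\theta_m$ inside $\mathrm{span}_{\mathbb R}\{W_m,W_{m+1}\}$ and fixes every other $W_j$. One then checks the two sign conventions: with $G_m = iW_{m+1}W_m$ (left contraction) the rotation sends $W_m \mapsto \cos\theta_m W_m + \sin\theta_m W_{m+1}$, transferring weight from $W_m$ to $W_{m+1}$; with $G_m = iW_mW_{m+1}$ (right contraction) it sends $W_{m+1} \mapsto \cos\theta_m W_{m+1} + \sin\theta_m W_m$, transferring weight the other way. (The locality of $G_m$ is irrelevant to correctness and is recorded in Lemma~\ref{lem:locality} only for the depth estimate.)

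With this lemma the argument reduces to bookkeeping on the coefficient vector, noting that $U_RU_L(\cdot)U_L^\dagger U_R^\dagger$ is applied innermost-first, i.e.\ in the order $R_1,R_2,\dots,R_{k-1}$ (the left contraction $U_L$) followed by $R_{n-1},R_{n-2},\dots,R_k$ (the right contraction $U_R$). For the left phase I would show, by induction on $m=1,\dots,k-1$, that after step $m$ the coefficients of $W_1,\dots,W_m$ vanish, the coefficient of $W_{m+1}$ equals $\sqrt{\sum_{j=1}^{m+1}c_j^2}$, and the remaining coefficients are unchanged. The induction step uses that zeroing the $W_m$-coefficient forces $\tan\theta_m = (\text{current }W_m\text{ coeff})/c_{m+1} = \sqrt{\sum_{j\le m}c_j^2}/c_{m+1}$, and since $c_1=\dots=c_{k-1}=\epsilon$ this is exactly $\arctan(\sqrt m)$ for $m<k-1$ and $\arctan(\sqrt{k-1}\,\epsilon/c_k)$ for $m=k-1$, matching the prescribed angles, while the new $W_{m+1}$-coefficient $\sqrt{\sum_{j\le m}c_j^2 + c_{m+1}^2}$ is positive. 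Thus $U_L O_k U_L^\dagger = \sqrt{(k-1)\epsilon^2 + c_k^2}\;W_k + \sum_{j=k+1}^n \epsilon\,W_j$.

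For the right phase (case $k\ne n$) I would induct from $m=n-1$ down to $m=k+1$, showing that after step $m$ the coefficients of $W_{m+1},\dots,W_n$ vanish, the coefficient of $W_m$ equals $\sqrt{n-m+1}\,\epsilon$, and $W_1,\dots,W_{k-1}$ stay zero; here zeroing the $W_{m+1}$-coefficient requires $\tan\theta_m = (\text{current }W_{m+1}\text{ coeff})/c_m = \sqrt{n-m}$, i.e.\ $\arctan(\sqrt{n-m})$, as prescribed. The final step $m=k$ merges the accumulated $W_{k+1}$-weight $\sqrt{n-k}\,\epsilon$ into $W_k$, needing $\tan\theta_k = \sqrt{n-k}\,\epsilon/\sqrt{(k-1)\epsilon^2 + c_k^2}$ and leaving the $W_k$-coefficient equal to $\sqrt{(k-1)\epsilon^2 + c_k^2 + (n-k)\epsilon^2} = \sqrt{(n-1)\epsilon^2 + c_k^2}$. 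Since $\epsilon^2 = 1/(n(n-1))$ and $c_k^2 = (n-1)/n$ this equals $\sqrt{1/n + (n-1)/n} = 1$, while all other coefficients vanish, so $UO_kU^\dagger = W_k = \mathcal E_k$. The case $k=n$ is identical except that $U_R$ is the empty product, so $U = U_L = R_{n-1}\cdots R_1$ and the left phase alone drives all the weight onto $W_n = \prod_{j=1}^{n-1}Z_j = \mathcal E_n$, with final coefficient $\sqrt{(n-1)\epsilon^2 + c_n^2} = 1$.

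The conceptual content is small: each conjugation is an orthogonal transformation of $\mathrm{span}_{\mathbb R}\{W_j\}$, so the coefficient norm is automatically preserved and equals $1$ by the normalization $\sum_j c_j^2 = 1$. The substantive part is only to verify that the prescribed angles zero out the off-target coefficients in the stated order and that the accumulating coefficient never changes sign (all $\theta_m \in (0,\pi/2)$), so the endpoint is $+W_k$ and not $-W_k$. I therefore expect the main obstacle to be precisely this direction/sign bookkeeping in the Givens lemma — pinning down the order $W_{m+1}W_m$ versus $W_mW_{m+1}$ inside $G_m$ so that the intended transfer occurs with a positive angle — together with keeping the two nested inductions aligned with the innermost-first order in which the factors of $U = U_RU_L$ act under conjugation.
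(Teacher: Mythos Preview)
Your proposal is correct and follows essentially the same approach as the paper: both arguments establish the planar ``Givens rotation'' identity for $R_m$ acting on the anticommuting pair $\{W_m,W_{m+1}\}$ (while fixing all other $W_j$), then run two inductions --- a left sweep $m=1,\dots,k-1$ accumulating weight into $W_k$ and a right sweep $m=n-1,\dots,k$ doing the same --- and finish with the normalization $\sum_j c_j^2 = 1$. Your observation that $G_m$ commutes with $W_j$ for $j\ne m,m+1$ because $W_j$ anticommutes with both factors is exactly the content the paper isolates as Lemma~\ref{lem:locality2}, and your treatment of the $k=n$ case (where $U_R$ is the empty product) matches the paper's parenthetical remark at the end of the left-contraction step.
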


\begin{proof}
    We provide in the rigorous proof using mathematical induction to Appendix~\ref{app:circuit_proof} and provide a sketch of the construction principle in this section.
    
    First, due to the locality of the operators (Lemma~\ref{lem:locality2} in Appendix~\ref{app:locality}), the left transformation $U_L$ and the right transformation $U_R$ act on Pauli strings in independent regions.
    $U_L$ is designed to sequentially transfer the coefficients of the Pauli strings $W_j$ in the left region ($j < k$) to the target $W_k$ through a series of rotations:
    \begin{align}
        U_L \qty( \sum_{j=1}^{k} c_j W_j ) U_L^\dagger = \lambda_L W_k, \quad \text{where } \lambda_L = \sqrt{\sum_{j=1}^k c_j^2}.
    \end{align}
    Similarly, $U_R$ transfers the coefficients from the right region ($j > k$) into $W_k$:
    \begin{align}
        U_R \qty( \lambda_L W_k + \sum_{j=k+1}^{n} c_j W_j ) U_R^\dagger =  W_k.
    \end{align}

    Therefore, the transformed observable becomes $W_k$, i.e., $\mathcal{E}_k$.
\end{proof}

\section{Discussion and Conclusion} \label{sec:discussion_conclusion}

In this study, we established an analytical construction method for the optimal bound of the $(n, n-1)$-QRACs, which had previously remained a subject of numerical conjecture.
In this section, we discuss the asymptotic properties and computational aspects of the derived construction and present the conclusion of this study.

\subsection{Quantum Advantage}

First, we compare the quantum success probability $\mathcal{P}^Q_{n, n-1}$ achieved by this protocol with the classical bound $\mathcal{P}^C_{n, n-1}$.
Using a classical state of $n-1$ bits, the upper bound of the success probability for an $(n, n-1)$-RACs is given by:
\begin{align}
    \mathcal{P}^C_{n, n-1} = \frac{1}{n}\qty(n-\frac{1}{2}).
\end{align}
From this, it can be confirmed that the gap demonstrating quantum advantage, $\Delta\mathcal{P}(n) = \mathcal{P}^Q_{n, n-1} - \mathcal{P}^C_{n, n-1} (>0)$, decreases on the order of $\mathcal{O}(1/n)$.

It is necessary to emphasize the experimental difficulty imposed by this scaling.
The fact that the gap $\Delta\mathcal{P}(n)$ shrinks with the inverse of $n$ implies that the number of measurements required to show a statistically significant difference for quantum advantage scales as $\mathcal{O}(n^2)$.
Furthermore, in actual devices, in addition to errors associated with gate operations, measurement errors occurring during the readout of states exist at non-negligible levels.
Therefore, there is an increasing risk with dimension $n$ that physical noise levels will exceed the minute gap $\Delta\mathcal{P}(n)$, making it difficult to distinguish from the classical bound.

However, this experimental difficulty does not diminish the scientific value of this protocol as a benchmark.
Rather, this difficulty allows the protocol to function as a rigorous dimension witness for the effective dimension.
If experimental results at a certain dimension $n$ fail to significantly exceed the classical bound $P_C(n)$, it implies that the device, even if possessing sufficient physical qubits, is not effectively using the $2^{n-1}$-dimensional Hilbert space as a computational resource due to limitations in noise or control precision.

Conversely, if this rigorous condition is met, it provides strong proof that the device is accurately controlling the vast state space beyond classical capabilities. In this sense, the $(n, n-1)$-QRACs serve as a strict quality assurance metric. Crucially, to make such a rigorous benchmark feasible despite the heavy measurement cost, the implementation cost of the quantum circuit itself must be minimal. This leads us to the operational advantage discussed in the next subsection.

\subsection{Operational Accessibility via $O(n)$ Circuits}

The quantum circuit construction derived in this study demonstrates that quantum advantage in the $(n, n-1)$-QRACs is achievable using only computationally efficient operations.
Generally, as the dimension of the Hilbert space increases, the number of gates required to implement a desired unitary operation scales exponentially. If the optimal encoding and decoding operations required exponential complexity, verifying the quantum advantage would be practically impossible.

However, the decomposition $U = U_R U_L$ we presented reflects the algebraic symmetry and locality of the optimal observables, guaranteeing that the circuit depth and the number of CNOT gates scale with $O(n)$. On the other hand, the circuit depth required for encoding scales with $O(n^2)$ (see Appendix~\ref{app:encode}). Although this cost is higher than the $O(n)$ required for decoding, it remains within polynomial time, allowing for efficient state preparation.
This efficiency confirms that the $(n, n-1)$-QRACs are not merely a theoretical construct but that the quantum advantage of this class can be implemented naturally without requiring complex operations.

\subsection{Transition from Non-commutativity to Commutativity}

Furthermore, our construction highlights a qualitative change in quantum correlations accompanying high-dimensionalization.
In low dimensions (e.g., $n=2, 3$), strong anti-commutativity between observables causes the state to collapse significantly upon a single measurement; however, in high dimensions, the effects of non-commutativity are suppressed.

Specifically, the non-commutativity of the observables $O_k$ we constructed is dominated by the off-diagonal term $\mathcal{K}_k$. Considering that $\mathcal{K}_k$ consists of the sum of approximately $n$ anti-commuting Pauli strings with coefficients of $O(1/n)$, the operator norm of the commutator is evaluated as:
\begin{align}
    \| [O_k, O_l] \| \sim O\qty(\frac{1}{n}) \quad (k \neq l).
\end{align}
This asymptotic commutativity suggests the suppression of state disturbance due to measurement.
Consider the post-measurement state immediately after measuring the first bit $x_1$ and obtaining the correct projective measurement result (projection onto $P_1=M_{x_1|1}$) with probability $\mathcal{P}_1$. The change $\Delta \expval{ O_2 }:= \expval{ O_2 }_{post} - \expval{O_2}_{initial}$ in the expectation value of the observable $O_2$ for the second bit is decomposed into the following two contributions:
\begin{align}
    \Delta\expval{ O_2 } = \underbrace{(\expval{\mathcal{D}}_{post} - \expval{\mathcal{D}}_{initial})}_{\text{(i)}} - \underbrace{\expval{\mathcal{V}}_{initial}}_{\text{(ii)}}.
\end{align}
Introducing the projection operator onto the orthogonal complement $Q_1 := I_{n-1} - P_1$, we define
\begin{align}
    \begin{cases}
        \mathcal{D} := P_1 O_2 P_1 + Q_1 O_2 Q_1, \\
        \mathcal{V} := P_1 O_2 Q_1 + Q_1 O_2 P_1.
    \end{cases}
\end{align}
That is, (i) represents the fluctuation of the diagonal components, and (ii) represents the contribution of the off-diagonal components.
The magnitude of term (ii) is directly limited by the commutator norm between observables:
\begin{align}
    | \text{(ii)} | = | \expval{\mathcal{V}}_{initial} | \leq \frac{1}{2} \| [O_1, O_2] \| \sim O\qty(\frac{1}{n}).
\end{align}
On the other hand, term (i) is proportional to the failure probability of the first bit measurement. Using the main result of this study, $\mathcal{P}_1=\mathcal{P}^Q_{n, n-1}  = 1 - O(1/n)$, we obtain:
\begin{align}
    | \text{(i)} | \leq 2 \|\mathcal{D}\| (1-\mathcal{P}_1) \sim O\qty(\frac{1}{n}).
\end{align}
In conclusion, $\Delta\expval{ O_2 }$ is suppressed on the order of $O(1/n)$.
This theoretically guarantees that in the limit $n \to \infty$, the influence of measuring one bit on the information of other bits vanishes, enabling continuous readout.

However, the attempt to identify all bits simultaneously is regulated by the limitations of the quantum system's dimension.
Since the dimension of the Hilbert space used for encoding is $d=2^{n-1}$, the maximum number of mutually orthogonal and perfectly distinguishable states within this space is limited to $2^{n-1}$.
In contrast, the total number of input bit strings to be identified is $2^n$.
Therefore, even with any measurement strategy, the average probability of correctly identifying the entire bit string without error cannot exceed the dimensional ratio of $1/2$.
That is, while the readout accuracy for individual bits approaches $1$ with high-dimension, the complete extraction of all information is limited by the dimension, which is consistent with the derived scaling $\Delta\expval{ O_2 } \sim O(1/n)$.

\subsection{Information Theoretic Cost of Symmetry}

Finally, to evaluate the information-theoretic efficiency of this protocol, we analyze the gap $\Delta \mathcal{I}(n)$ between the Holevo bound, which is the fundamental limit of quantum information capacity, and the total mutual information obtained by this method.

First, we define the difference between the Holevo bound and the total mutual information $\mathcal{I}_{\text{total}}(n)$ obtained in this protocol as follows:
\begin{align}
    \Delta \mathcal{I}(n) := (n-1) - \mathcal{I}_{\text{total}}(n).
\end{align}
Note that due to the symmetry of the encoding and the independent identical distribution of each bit, the total mutual information is $n$ times the mutual information of a single bit. Using the binary entropy function $H(p) := -p \log_2 p - (1-p) \log_2 (1-p)$, this can be written as:
\begin{align}
    \mathcal{I}_{\text{total}}(n)
    = n(1 - H(\mathcal{P}^Q_{n, n-1})).
\end{align}
Therefore, the gap $\Delta \mathcal{I}(n)$ is expressed as a function of the success probability $\mathcal{P}^Q_{n, n-1} $ as:
\begin{align}
    \Delta \mathcal{I}(n) = n H(\mathcal{P}^Q_{n, n-1}) - 1.
\end{align}

We evaluate the asymptotic behavior of $\Delta \mathcal{I}(n)$ as $n \to \infty$.
From the results of the previous section, the success probability is given by $\mathcal{P}^Q_{n, n-1} \approx 1 - 1/(4n)$. Thus,
\begin{align}
    H(\mathcal{P}^Q_{n, n-1}) &\approx \frac{2 + \log_2 n}{4n}.
\end{align}
Substituting this into the definition of the gap yields the following scaling law:
\begin{align}
    \Delta \mathcal{I}(n) \sim O(\log n).
\end{align}
This result indicates that the information loss relative to the Holevo bound $n-1$ does not converge to a constant but increases logarithmically with the dimension $n$.
This reflects the geometric constraints of this protocol.

If one were to compress $n$ bits into $n-1$ bits using a classical strategy, simply adopting the strategy of ``discarding 1 bit and sending the remaining $n-1$ bits directly" would result in a total information sum of $n-1$, achieving the Holevo bound. However, this strategy possesses a strong asymmetry in that the information of the discarded bit is completely lost.

In contrast, the quantum protocol in this study satisfies the strict constraint of treating all bits fairly.
The derived gap of $O(\log n)$ can be interpreted as the cost that must be paid to embed information fairly into a quantum state.

\subsection{Conclusion}

In conclusion, this study provides a rigorous analytical foundation for $(n, n-1)$-QRACs, resolving the conjecture regarding their optimal success probability. By establishing an explicit operator formalism, we have bridged the gap between numerical observations and theoretical proof. Furthermore, we translated this algebraic framework into a practical gate decomposition that yields an $O(n)$-depth decoding circuit, ensuring hardware-efficient implementation under linear connectivity constraints. Our asymptotic analysis further clarifies the quantum-to-classical transition: while the non-commutativity of measurements vanishes in the high-dimensional limit, a logarithmic information deficit relative to the Holevo bound persists. These results offer deeper insights into the theoretic constraints at the boundary of quantum and classical regimes.

\appendix

\section{Properties of the Matrix $A_n$} \label{app:An_prop}

\begin{lemma}[Properties of $A_n$] \label{lem:An_properties}
    For any $n \ge 1$, the matrix
    \begin{align}
        A_n = \sum_{l=1}^{n} Z^{\otimes (l-1)} \otimes X \otimes I_{n-l}
    \end{align}
    satisfies the following properties:
    \begin{enumerate}
        \item \textbf{Adjacency:} The matrix element $(A_n)_{yx}$ is non-zero if and only if the Hamming distance is $d_H(y, x) = 1$.
        \item \textbf{Values:} The values of the non-zero elements are either $\pm 1$. That is, $(A_n)_{yx} \in \{0, 1, -1\}$.
        \item \textbf{Orthogonality:} $A_n^2 = n I_{n}$.
    \end{enumerate}
\end{lemma}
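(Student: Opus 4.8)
The plan is to reduce both parts to the action of $A_n$ on the computational basis $\{\ket{x}\}_{x\in\{0,1\}^n}$ and then exploit the Pauli-string structure of the summands. Writing $T_l := Z^{\otimes(l-1)}\otimes X\otimes I_{n-l}$ so that $A_n=\sum_{l=1}^n T_l$, the first step is to record the elementary identity
\begin{align}
    T_l\ket{x} = (-1)^{x_1+\cdots+x_{l-1}}\,\ket{x\oplus e_l},
\end{align}
which holds because the $X$ on site $l$ flips $x_l$ while the leading $Z$'s contribute the indicated sign; summing over $l$ gives $A_n\ket{x}=\sum_{l=1}^n(-1)^{x_1+\cdots+x_{l-1}}\ket{x\oplus e_l}$.

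Properties 1 and 2 then follow with essentially no further work: the $n$ target vectors $\ket{x\oplus e_l}$ are pairwise distinct, so no cancellation can occur between summands, and hence $(A_n)_{yx}=\bra{y}A_n\ket{x}$ is nonzero precisely when $y=x\oplus e_l$ for some $l$, i.e. when $d_H(y,x)=1$, in which case its value is the single sign $(-1)^{x_1+\cdots+x_{l-1}}\in\{+1,-1\}$.

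For Property 3 I would expand $A_n^2=\sum_{l,l'}T_lT_{l'}$, use $T_l^2=I_n$ (since $X^2=Z^2=I$) to collect the diagonal terms into $nI_n$, and argue that every cross term cancels because $T_l$ and $T_{l'}$ anticommute whenever $l\neq l'$: pairing $T_lT_{l'}$ with $T_{l'}T_l$ then annihilates the off-diagonal part, leaving $A_n^2=nI_n$. The anticommutation is verified by comparing the two strings site by site for, say, $l<l'$: on sites $1,\dots,l-1$ both carry $Z$; on site $l$ one carries $X$ and the other $Z$ (anticommuting); on sites $l+1,\dots,l'-1$ one carries $I$ and the other $Z$; on site $l'$ one carries $I$ and the other $X$; and the remaining sites carry $I$. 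Exactly one site contributes a minus sign, so $T_lT_{l'}=-T_{l'}T_l$.

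I do not expect a real obstacle; the content is bookkeeping. The only point requiring care is the cross-term cancellation in Property 3 — namely checking that two distinct summands anticommute on precisely one site (the smaller of $l$ and $l'$), which is exactly what makes $\sum_{l\neq l'}T_lT_{l'}$ collapse. As an alternative consistent with the inductive style used elsewhere in the paper, one could instead use the recursion $A_n = X\otimes I_{n-1}+Z\otimes A_{n-1}$ and compute $A_n^2 = I_n + (XZ+ZX)\otimes A_{n-1} + I_1\otimes A_{n-1}^2 = I_n + (n-1)I_n = nI_n$ by the inductive hypothesis, with base case $A_1=X$, $A_1^2=I$.
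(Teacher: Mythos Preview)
Your argument is correct. The paper, however, proves all three properties by induction on $n$ via the recursion $A_n = X\otimes I_{n-1} + Z\otimes A_{n-1}$ (the route you mention only as an alternative at the end): it reads off the block structure of $(A_n)_{yx}$ to get adjacency and the $\pm 1$ values, and squares the $2\times 2$ block matrix to get $A_n^2 = nI_n$. Your primary route is genuinely different and more direct: computing $T_l\ket{x}$ explicitly settles Properties~1 and~2 in one line and, as a bonus, already produces the closed-form sign $(A_n)_{yx}=(-1)^{x_1+\cdots+x_{l-1}}$ that the paper establishes separately as its Lemma~\ref{lem:An_transition}; and your anticommutation argument for Property~3 exposes the underlying Clifford-algebra structure of the $T_l$'s (pairwise anticommuting, squaring to the identity), which is exactly the mechanism the paper exploits later when building the decoding circuit. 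The inductive proof, by contrast, packages the bookkeeping more uniformly and avoids the site-by-site commutator check.
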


\begin{proof}
    We prove this by mathematical induction.
    
    First, for the case $n=1$, $A_1 = \mqty(0 & 1 \\ 1 & 0)$.
    The non-zero elements are only the $(0,1)$ and $(1,0)$ components, and their Hamming distance is $1$. All non-zero entries are equal to $1$. Moreover, since $A_1^2 = I_1$, all properties hold.

    Next, assume that the properties hold for $n-1$. Based on the block structure of $A_n$, we decompose the indices $y, x \in \{0, 1\}^n$ into the first bit $y_1, x_1$ and the remaining $n-1$ bits $y', x'$.
    From
    \begin{align} 
    A_n &= X \otimes I_{n-1}+ Z \otimes A_{n-1},
    \end{align}
    the elements satisfy:
    \begin{align}
        (A_n)_{yx} = \begin{cases}
            (A_{n-1})_{y'x'} & \text{if } y_1 = x_1 = 0 \\
            -(A_{n-1})_{y'x'} & \text{if } y_1 = x_1 = 1 \\
            \delta_{y'x'} & \text{if } y_1 \neq x_1
        \end{cases}.
    \end{align}
    
    \textbf{1. Adjacency \& 2. Values:}
    \begin{itemize}
        \item Case $y_1 = x_1$: $(A_n)_{yx} = \pm (A_{n-1})_{y'x'}$. By the inductive hypothesis, this is non-zero only when $d_H(y', x') = 1$. In this case, the total Hamming distance is $d_H(y, x) = 1$, and the value is $\pm 1$.
        \item Case $y_1 \neq x_1$: $(A_n)_{yx} 
        = \delta_{y'x'}$. This is non-zero only when $y' = x'$, i.e., $d_H(y', x') = 0$. In this case, the total Hamming distance is $d_H(y, x) = 1$, and the value is $1$.
    \end{itemize}
    From the above, in either case, the non-zero condition coincides with $d_H(y, x) = 1$, and the values are $\pm 1$.

    \textbf{3. Orthogonality:}
    Computing the square of the matrix yields:
    \begin{align}
        A_n^2 
        &= \mqty(A_{n-1}^2 + I_{n-1} & A_{n-1} - A_{n-1} \\ A_{n-1} - A_{n-1} & I_{n-1} + A_{n-1}^2) \\
        &= \mqty( (n-1)I_{n-1} + I_{n-1} & 0 \\ 0 & (n-1)I_{n-1} + I_{n-1} ) = n I_{n}.
    \end{align}
    Therefore, the properties hold for any $n$.
\end{proof}

\begin{lemma}[Matrix Elements of $A_n$] \label{lem:An_transition}
    The matrix element $(A_n)_{yx} = \bra{y} A_n \ket{x}$ of the matrix $A_n$ in an $n$-qubit system is non-zero only when the Hamming distance is $d_H(x, y) = 1$, where $\ket{x}=\ket{x_1\dots x_n}$ and $\ket{y}=\ket{y_1\dots y_n}$. Let $l$ be the position of the flipped bit. Then, the element is given by:
    \begin{align}
        (A_n)_{yx} = \prod_{s=1}^{l-1} (-1)^{x_s}.
    \end{align}
\end{lemma}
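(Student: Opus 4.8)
The plan is to evaluate $(A_n)_{yx}$ directly from the explicit sum defining $A_n$, exploiting the fact that each summand is a tensor product of single-qubit Paulis whose action on a computational basis vector is immediate. Write $A_n = \sum_{l=1}^{n} P_l$ with $P_l := Z^{\otimes(l-1)} \otimes X \otimes I_{n-l}$. Since $Z\ket{c} = (-1)^c \ket{c}$ for $c \in \{0,1\}$ and $X\ket{c} = \ket{1-c}$, applying $P_l$ to $\ket{x} = \ket{x_1 \cdots x_n}$ flips the $l$-th bit and picks up the phase accumulated from the $Z$'s on the first $l-1$ qubits:
\begin{align}
  P_l \ket{x_1 \cdots x_n} = \Bigl( \prod_{s=1}^{l-1} (-1)^{x_s} \Bigr)\, \ket{x_1 \cdots x_{l-1}\, (1-x_l)\, x_{l+1} \cdots x_n}.
\end{align}

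Then I would take the inner product with $\ket{y}$. The ket on the right-hand side above is a single computational basis state that agrees with $x$ everywhere except position $l$; hence $\bra{y} P_l \ket{x}$ is nonzero precisely when $y = x \oplus e_l$, in which case it equals $\prod_{s=1}^{l-1}(-1)^{x_s}$. Summing over $l$ gives $(A_n)_{yx} = \sum_{l=1}^n \bra{y}P_l\ket{x}$. If $d_H(x,y) \neq 1$ then for every $l$ the pair $(y,x)$ fails the condition $y = x\oplus e_l$ (it differs from $x$ in zero or at least two places, whereas $x \oplus e_l$ differs from $x$ in exactly one), so every term vanishes and $(A_n)_{yx} = 0$. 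If $d_H(x,y) = 1$ with the unique flipped position $l$, then exactly the $l$-th term survives — the operators $P_1, \dots, P_n$ flip pairwise distinct bits, so at most one of them can contribute to a given matrix entry — and $(A_n)_{yx} = \prod_{s=1}^{l-1}(-1)^{x_s}$, as claimed.

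This argument is essentially a one-line computation, so I do not anticipate a genuine obstacle; the only points requiring mild care are bookkeeping ones. First, I should note that since $x_s = y_s$ for all $s < l$ (the sole disagreement being at position $l$), the phase may be written equivalently as $\prod_{s=1}^{l-1}(-1)^{y_s}$, so the formula has the expected symmetry. Second, as a consistency check, this recovers Lemma~\ref{lem:An_properties}: the matrix entry is supported exactly on Hamming-distance-one pairs and its nonzero values are $\pm 1$. One could alternatively derive the formula by induction on $n$ using the block decomposition $A_n = X \otimes I_{n-1} + Z \otimes A_{n-1}$ employed in the proof of Lemma~\ref{lem:An_properties}, tracking how the leading $Z$ contributes the factor $(-1)^{x_1}$ precisely when the flip occurs within the last $n-1$ qubits; but the direct evaluation above is shorter and self-contained, so that is the route I would take.
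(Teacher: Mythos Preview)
Your proof is correct and follows essentially the same approach as the paper: both compute $P_l\ket{x}$ (the paper writes $g_l$) directly from the single-qubit Pauli actions, read off the phase $\prod_{s=1}^{l-1}(-1)^{x_s}$ and the flipped basis vector, and then observe that exactly one term in the sum over $l$ can survive the inner product with $\bra{y}$. Your additional remarks on the $x\leftrightarrow y$ symmetry of the phase and the consistency with Lemma~\ref{lem:An_properties} are correct but not needed for the argument.
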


\begin{proof}
    We calculate this based on the definition $A_n = \sum_{l=1}^{n} g_l$ (where $g_l = Z^{\otimes (l-1)} \otimes X \otimes I_{n-l}$).
    Applying $g_l$ to the basis state $\ket{x}$, we get:
    \begin{align}
        g_l \ket{x_1 \dots x_n} &= (Z \ket{x_1}) \otimes \dots \otimes (Z \ket{x_{l-1}}) \otimes (X \ket{x_l}) \otimes \dots \\
        &= (-1)^{x_1} \dots (-1)^{x_{l-1}} \ket{x_1 \dots \bar{x}_l \dots x_n} \\
        &= \qty( \prod_{s=1}^{l-1} (-1)^{x_s} ) \ket{x \oplus e_l}.
    \end{align}
    Therefore, $\bra{y} A_n \ket{x}$ is non-zero only when $y = x \oplus e_l$, and it coincides with the phase factor derived above.
\end{proof}

\section{Proof of Lemma~\ref{lem:spectrum_projectors}} \label{app:spectrum}

\begin{proof}
    Let $\lambda$ be an eigenvalue of $S$ and $\ket{\lambda}$ be the corresponding eigenket. By definition,
    \begin{align}
        (P + Q) \ket{\lambda} = \lambda \ket{\lambda}
    \end{align}
    holds. Applying $P$ from the left to this equation and using $P^2 = P$, we obtain:
    \begin{align}
        P \ket{\lambda} + P Q \ket{\lambda} &= \lambda P \ket{\lambda} \\
        P Q \ket{\lambda} &= (\lambda - 1) P \ket{\lambda}. \label{eq:PQ_lambda}
    \end{align}
    Similarly, applying $Q$ from the left to the original equation and using $Q^2 = Q$, we obtain:
    \begin{align}
        Q P \ket{\lambda} + Q \ket{\lambda} &= \lambda Q \ket{\lambda}\\
        Q P \ket{\lambda} &= (\lambda - 1) Q \ket{\lambda}. \label{eq:QP_lambda}
    \end{align}
    
    Now, applying $P$ from the left to Eq.~\eqref{eq:QP_lambda}, the left-hand side can be transformed using the assumption $P Q P = \mu P$, while the right-hand side can be transformed by substituting Eq.~\eqref{eq:PQ_lambda}:
    \begin{align}
        P (Q P \ket{\lambda}) &= (\lambda - 1) P Q \ket{\lambda} \\
        (P Q P) \ket{\lambda} &= (\lambda - 1) (\lambda - 1) P \ket{\lambda} \\
        \mu P \ket{\lambda} &= (\lambda - 1)^2 P \ket{\lambda}.
    \end{align}
    From this, we obtain the following relation:
    \begin{align}
        \qty( (\lambda - 1)^2 - \mu ) P \ket{\lambda} = 0. \label{eq:eigen_condition}
    \end{align}
    
    We now consider two cases:
    \begin{enumerate}
        \item Case where $P \ket{\lambda} \neq 0$:\\
        Since the coefficient must be zero according to Eq.~\eqref{eq:eigen_condition}, we have
        \begin{align}
            (\lambda - 1)^2 = \mu \quad \Longrightarrow \quad \lambda = 1 \pm \sqrt{\mu}.
        \end{align}
        
        \item Case where $P \ket{\lambda} = 0$:\\
        Substituting $P\ket{\lambda}=0$ into the original equation yields $Q\ket{\lambda} = \lambda \ket{\lambda}$. Since $Q$ is a projection operator, it follows that $\lambda \in \{0, 1\}$.
    \end{enumerate}
    In conclusion, the spectrum of $S$ is $\{1 \pm \sqrt{\mu}, 0, 1\}$.
\end{proof}

\section{Verification of Algebraic Condition for Projectors} \label{app:projector_condition}

In this appendix, we prove that the defined projection operators $P=P_{k,b}^{(O)}$ and $Q=P_{k,b}^{(E)}$ satisfy $PQP = \mu P$ and $QPQ = \mu Q$.
Since $P$ is the projection operator onto the subspace spanned by the orthonormal basis $\{ \ket{\psi_y} \}_{y \in O_{k,b}}$, this is equivalent to the following condition on the matrix elements:
\begin{align}
    \bra{\psi_y} Q \ket{\psi_{y'}} = \mu \delta_{yy'} \quad (\forall y, y' \in O_{k,b}).
\end{align}

Expanding the left-hand side, we obtain:
\begin{widetext}
\begin{align}
    \bra{\psi_y} Q \ket{\psi_{y'}} &= \sum_{x \in E_{k,b}} \braket{\psi_y}{\psi_x}\braket{\psi_x}{\psi_{y'}} \\
    &= \frac{1}{n} \sum_{x \in E_{k,b}} \sum_{\tilde{x} \in N(y)} (A_n)_{y\tilde{x}} \braket{\psi_{\tilde{x}}}{\psi_x} \sum_{\tilde{x}' \in N(y')} (A_n)_{y'\tilde{x}'} \braket{\psi_x}{\psi_{\tilde{x}'}}\\
    &= \frac{1}{n} \sum_{x \in E_{k,b} \cap N(y) \cap N(y')} (A_n)_{yx} (A_n)_{y'x}.
\end{align}    
\end{widetext}

Note that $y, y' \in O_{k,b}$, which implies $y_k = y'_k = b$.
Since a neighbor $x$ is a 1-bit flip of $y$, if $x \in N(y)$, it can be written as $x = y \oplus e_l$ for some $l \in [n]$.
Furthermore, for the condition $x \in E_{k,b}$ (i.e., $x_k = b$) to hold, the flipped position $l$ must not be $k$.
Therefore, the summation is restricted to indices $l \neq k$.

\paragraph*{Case 1: $y = y'$}
In this case,
\begin{align}
    \bra{\psi_y} Q \ket{\psi_y} &= \frac{1}{n} \sum_{l \neq k} (A_n)_{y, y \oplus e_l}^2 \\
    &= \frac{1}{n} \sum_{l \neq k} (\pm 1)^2 = \frac{n-1}{n}.
\end{align}
This yields $\mu = (n-1)/n$.

\paragraph*{Case 2: $y \neq y'$}
Since both $y$ and $y'$ belong to $O_{k,b}$, the Hamming distance $d_H(y, y')$ is even and at least $2$.
\begin{itemize}
    \item Case $d_H(y, y') \ge 4$: Since there are no common neighbors, the sum is $0$.
    \item Case $d_H(y, y') = 2$: There are exactly two common neighbors in $N(y) \cap N(y')$. Let these be $u$ and $v$.
    Since the $k$-th bits of $y$ and $y'$ are equal, their difference originates from two bit positions $p, q$ other than $k$ (i.e., $p, q \neq k$). Consequently, the $k$-th bit of the common neighbors $u, v$ also remains $b$, satisfying $u, v \in E_{k,b}$.
    Hence, this sum coincides with the off-diagonal components of $A_n^2$:
    \begin{align}
        &\sum_{x \in E_{k,b} \cap N(y) \cap N(y')} (A_n)_{yx} (A_n)_{y'x} \\
        &= (A_n)_{yu}(A_n)_{y'u} + (A_n)_{yv}(A_n)_{y'v} \\
        &= (A_n^2)_{yy'}.
    \end{align}
    From Lemma~\ref{lem:An_properties} ($A_n^2 = nI_n$), we have $(A_n^2)_{yy'} = 0$ for distinct $y, y'$. Thus, the off-diagonal terms vanish.
\end{itemize}

Thus, we conclude that $PQP = \frac{n-1}{n} P$. By symmetry, $QPQ = \frac{n-1}{n} Q$ holds similarly.

\section{Derivation of Optimal Observables} \label{app:derivation_Ok}

In this appendix, we derive the explicit form of the optimal measurement observable $O_k$.
By definition,
\begin{align}
    O_k &= M_{0|k} - M_{1|k} = \Pi_{0,k}^{(+)} - \Pi_{1,k}^{(+)} \\
    &= \frac{1}{2\sqrt{\mu}} \Delta_k
\end{align}
(where $\Delta_k = S_{k,0} - S_{k,1}$). The operator $\Delta_k$ can be expanded as follows:
\begin{align}
    \Delta_k &= \sum_{y \in O_n} (-1)^{y_k} \ket{\psi_y}\bra{\psi_y} + \sum_{x \in E_n} (-1)^{x_k} \ket{\psi_x}\bra{\psi_x}.
\end{align}

Let the second term be $\mathcal{E}_k$.
From the definition of encoding, $\ket{\psi_x} = \ket{x_1 \dots x_{n-1}}$ for any $x \in E_n$. Furthermore, due to the condition $x \in E_n$, once the first $n-1$ bits $x' = (x_1, \dots, x_{n-1})$ are determined, the last bit $x_n$ is uniquely fixed. Therefore, the summation is equivalent to the sum over all patterns of $x'$, i.e., $\sum_{x' \in \{0,1\}^{n-1}}$.

Here, we consider cases based on the value of $k$.

\paragraph*{Case 1: $k \in [n-1]$}
In this case, since $(-1)^{x_k}$ is the sign corresponding to the qubit index $k$, we have:
\begin{align}
    \mathcal{E}_k &= \sum_{x' \in \{0,1\}^{n-1}} (-1)^{x'_k} \ket{x'}\bra{x'} \\
    &= Z_k.
\end{align}

\paragraph*{Case 2: $k = n$}
In this case, the sign is $(-1)^{x_n}$. Since $\sum_{j=1}^n x_j \equiv 0 \pmod 2$ for $x \in E_n$, it holds that
\begin{align}
    x_n \equiv \sum_{j=1}^{n-1} x_j \pmod 2.
\end{align}
Thus, the sign factor can be transformed as
\begin{align}
    (-1)^{x_n} = (-1)^{\sum_{j=1}^{n-1} x_j} = \prod_{j=1}^{n-1} (-1)^{x_j}.
\end{align}
Using this, we obtain:
\begin{align}
    \mathcal{E}_n &= \sum_{x' \in \{0,1\}^{n-1}} \qty( \prod_{j=1}^{n-1} (-1)^{x'_j} ) \ket{x'}\bra{x'} \\
    &= \sum_{x_1 \dots x_{n-1}} \qty( (-1)^{x_1}\ket{x_1}\bra{x_1} \otimes \dots \otimes (-1)^{x_{n-1}}\ket{x_{n-1}}\bra{x_{n-1}} ) \\
    &= \bigotimes_{j=1}^{n-1} \qty( \sum_{b \in \{0,1\}} (-1)^b \ket{b}\bra{b} ) \\
    &= Z^{\otimes (n-1)},
\end{align}
which is the parity operator for all qubits.

On the other hand, the first term is calculated using the coefficient matrix $A_n$ as follows:
\begin{widetext}
\begin{align}
    \sum_{y \in O_n} (-1)^{y_k} \ket{\psi_y}\bra{\psi_y} 
    &= \frac{1}{n} \sum_{y \in O_n} (-1)^{y_k} \sum_{x, x' \in N(y)} (A_n)_{yx} (A_n)_{yx'} \ket{\psi_x}\bra{\psi_{x'}} \\
    &= \underbrace{\frac{1}{n} \sum_{y \in O_n} \sum_{x \in N(y)} (-1)^{y_k} ((A_n)_{yx})^2 \ket{\psi_x}\bra{\psi_x}}_{\text{Diagonal Part}} 
    + \underbrace{\frac{1}{n} \sum_{y \in O_n} \sum_{\substack{x, x' \in N(y) \\ x \neq x'}} (-1)^{y_k} (A_n)_{yx} (A_n)_{yx'} \ket{\psi_x}\bra{\psi_{x'}}}_{\text{Off-diagonal Part}}.
\end{align}    
\end{widetext}

\subsubsection*{Diagonal Part}
We calculate the contribution of the diagonal components.
\begin{align}
    (\text{Diag}) &= \frac{1}{n} \sum_{x \in E_n} \ket{\psi_x}\bra{\psi_x} \sum_{y \in N(x)} (-1)^{y_k}.
\end{align}
Consider the inner sum $\sum_{y \in N(x)} (-1)^{y_k}$.
Each element $y$ in the neighborhood $N(x)$ can be expressed as $y = x \oplus e_j$ for some bit position $j \in [n]$.
For the index $k$ of interest:
\begin{itemize}
    \item If $j = k$: Since $y_k \ne x_k$, the sign is $(-1)^{y_k} = -(-1)^{x_k}$.
    \item If $j \ne k$: Since $y_k = x_k$, the sign is $(-1)^{y_k} = (-1)^{x_k}$.
\end{itemize}
This logic holds for all $k$, including $k=n$. Therefore,
\begin{align}
    \sum_{y \in N(x)} (-1)^{y_k} &= 1 \cdot (-(-1)^{x_k}) + (n-1) \cdot (-1)^{x_k} \\
    &= (n-2)(-1)^{x_k}.
\end{align}
Substituting this back into the original expression, we obtain:
\begin{align}
    (\text{Diag}) &= \frac{n-2}{n} \sum_{x \in E_n} (-1)^{x_k} \ket{\psi_x}\bra{\psi_x} \\
    &= \frac{n-2}{n} \mathcal{E}_k,
\end{align}
where $\mathcal{E}_k$ is the operator for the even parity part mentioned above, where $\mathcal{E}_k = Z_k$ or $Z^{\otimes (n-1)}$ when $k=n$.

\subsubsection*{Off-diagonal Part}
We calculate the contribution of the off-diagonal components where $x \neq x'$.
\begin{align}
    (\text{Off}) &= \frac{1}{n} \sum_{\substack{x, x' \in E_n \\ d(x,x')=2}} \ket{\psi_x}\bra{\psi_{x'}} \underbrace{ \sum_{y \in N(x) \cap N(x')} (-1)^{y_k} (A_n)_{yx} (A_n)_{yx'} }_{C_{x,x'}}.
\end{align}
Consider the coefficient part $C_{x,x'}$. Since the Hamming distance between $x$ and $x'$ is 2, there are exactly two common neighbors, denoted by $u$ and $v$.

Let the differing bit positions between $x$ and $x'$ be $j$ and $m$ (i.e., $x' = x \oplus e_j \oplus e_m$). Then the common neighbors are $u = x \oplus e_j$ and $v = x \oplus e_m$.
If neither $j$ nor $m$ is equal to the observation index $k$, then the $k$-th bit is not flipped in the neighbors $u, v$, so $u_k = v_k = x_k$. In this case, the sign factor is common, and we have:
\begin{align}
    C_{x,x'} = (-1)^{x_k} \qty( (A_n)_{ux}(A_n)_{ux'} + (A_n)_{vx}(A_n)_{vx'} ).
\end{align}
However, due to the property of the matrix $A_n$ ($A_n^2 = nI_n$), the sum over all neighbors for distinct $x, x'$ is $\sum_y (A_n)_{yx}(A_n)_{yx'} = 0$. In the case $j, m \neq k$, since there are no contributions from paths other than these two neighbors $u, v$, this partial sum itself becomes $0$ and vanishes.

Therefore, non-zero contributions remain only when one of the flipped positions is $k$.
Let the other flipped position be $l (\neq k)$. That is, the two neighbors are:
\begin{itemize}
    \item $u = x \oplus e_k \implies (-1)^{u_k} = -(-1)^{x_k}$
    \item $v = x \oplus e_l \implies (-1)^{v_k} = (-1)^{x_k}$
\end{itemize}
In this case,
\begin{align}
    C_{x,x'} &= (-1)^{u_k}(A_n)_{ux}(A_n)_{ux'} + (-1)^{v_k}(A_n)_{vx}(A_n)_{vx'} \\
    &= (-1)^{x_k} \qty( - (A_n)_{ux}(A_n)_{ux'} + (A_n)_{vx}(A_n)_{vx'} ). \label{eq:coeff_diff}
\end{align}

We now consider cases based on the relative order of $k$ and $l$.

\paragraph*{Case 1: Terms with $l > k$}
From Lemma~\ref{lem:An_transition} in Appendix~\ref{app:An_prop}, the matrix element $(A_n)_{yx} = \bra{y} A_n \ket{x}$ gives the following phase factor when the flipped bit position is $p$:
\begin{align}
    (A_n)_{yx} = \prod_{s=1}^{p-1} (-1)^{x_s}.
\end{align}
From the above,
\begin{align}
    C_{x,x'} &= (-1)^{x_k} \qty( -  \prod_{s=1}^{k-1} (-1)^{x_s}\prod_{s'=1}^{l-1} (-1)^{x'_{s'}} + \prod_{s=1}^{l-1} (-1)^{x_s}\prod_{s'=1}^{k-1} (-1)^{x'_{s'}}  )\\
    &= 2 \prod_{s=k+1}^{l-1} (-1)^{x_{s}}.
\end{align}

\paragraph*{Case 2: Terms with $l < k$}
In this case,
\begin{align}
    C_{x,x'} &= (-1)^{x_k} \qty( -  \prod_{s=1}^{k-1} (-1)^{x_s}\prod_{s'=1}^{l-1} (-1)^{x'_{s'}} + \prod_{s=1}^{l-1} (-1)^{x_s}\prod_{s'=1}^{k-1} (-1)^{x'_{s'}}  )\\
    &= -2 \prod_{s=l}^{k} (-1)^{x_{s}}.
\end{align}

\begin{widetext}
Based on the above,
\begin{align}
    (\text{Off}) &= \frac{1}{n} \sum_{x \in E_n} \sum_{l \neq k} \ket{\psi_x}\bra{\psi_{x'}} C_{x,x'} \quad (\text{where } x' = x \oplus e_k \oplus e_l) \\
    &= \frac{1}{n} \qty( \sum_{x \in E_n} \sum_{l > k} \ket{\psi_x}\bra{\psi_{x'}} C_{x,x'} + \sum_{x \in E_n} \sum_{l < k} \ket{\psi_x}\bra{\psi_{x'}} C_{x,x'} )\\
    &=\frac{2}{n}\left(\sum_{l=k+1}^{n-1}  X_k Z_{k+1} \dots Z_{l-1} X_l+(1-\delta_{kn})  X_k Z_{k+1} \dots Z_{n-1} \right.\\
    &\quad \left.-\sum_{l=1}^{k-1}   Z_lX_l Z_{l+1} \dots Z_kX_k-\sum_{x \in E_n} \sum_{l < n}  \ket{\psi_x}\bra{\psi_{x'}}  (-1)^{x_{n}}\prod_{s=l}^{n-1} (-1)^{x_{s}}\right)\\
    &=\frac{2}{n}\qty(\sum_{l=k+1}^{n-1}  X_k Z_{k+1} \dots Z_{l-1} X_l+(1-\delta_{kn})  X_k Z_{k+1} \dots Z_{n-1} +\sum_{l=1}^{k-1}   Y_lZ_{l+1} \dots Z_{k-1} Y_k-\delta_{kn}\sum_{x \in E_n} \sum_{l < n} \ket{\psi_x}\bra{\psi_{x'}} \prod_{s=1}^{l-1} (-1)^{x_{s}})\\
    &=\frac{2}{n}\qty(\sum_{l=k+1}^{n-1}  X_k Z_{k+1} \dots Z_{l-1} X_l+(1-\delta_{kn})  X_k Z_{k+1} \dots Z_{n-1} +\sum_{l=1}^{k-1}   Y_lZ_{l+1} \dots Z_{k-1} Y_k-\delta_{kn}\sum_{l=1}^{n-1}   Z_1Z_{2} \dots Z_{l-1}X_l)\\
    &=\frac{2}{n}\mathcal{K}_k.
\end{align}
We thus obtain the desired result.    
\end{widetext}

\section{Locality of Generators} \label{app:locality}

\begin{lemma}\label{lem:locality}
    $G_m = i W_{m+1} W_m$ involves at most two-body interactions.
\end{lemma}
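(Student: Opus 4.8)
The plan is to establish a single uniform structural fact: for every admissible step $m$, the two Pauli strings $W_m$ and $W_{m+1}$ carry \emph{the same} single-qubit operator on every qubit outside the pair $\{m,m+1\}$. Granting this, the product $W_{m+1}W_m$ acts as the identity on all of those qubits (each shared single-qubit factor squares to $I$), so it is, up to a scalar, a Pauli string supported on at most qubits $m$ and $m+1$. The scalar is harmless: since the $W_j$ are mutually anti-commuting, $(iW_{m+1}W_m)^\dagger = -iW_m W_{m+1} = iW_{m+1}W_m$, so $G_m$ is Hermitian; a Hermitian scalar multiple of a Pauli string must be $\pm 1$ times that Pauli string. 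Hence $G_m$ is (a sign times) a Pauli string whose support is contained in $\{m,m+1\}$, i.e.\ at most two-body. Note that the same argument applies verbatim to the right-contraction form $G_m = iW_m W_{m+1}$, since the support of a Pauli product does not depend on the order of the factors.

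First I would treat the case $k\neq n$, where $m$ runs over $\{1,\dots,n-1\}$, and split into the five sub-cases according to where $m$ and $m+1$ land among Regions I ($j<k$), II ($j=k$), III ($k<j\le n$): both indices in Region I; the I/II boundary $m=k-1$; the II/III boundary $m=k$; both indices in Region III; and the upper boundary $m=n-1$ of Region III. In each sub-case I would write $W_m$ and $W_{m+1}$ explicitly from the definitions and read off that the common ``prefix'' — the leading $Y_k$ or $X_k$ together with its chain of $Z$'s — is literally identical on every qubit except $m$ and $m+1$, where one string carries an $X$, $Y$ or $Z$ and the other an $I$ or a $Z$. Multiplying qubitwise then leaves a Pauli (such as $\pm Y_m X_{m+1}$, or a single-qubit $\pm Y_m$ in the degenerate boundary cases), all supported in $\{m,m+1\}$.

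Then I would repeat the bookkeeping for $k=n$, where there are only two regions, $W_j=-Z_1\cdots Z_{j-1}X_j$ for $j<n$ and $W_n=Z_1\cdots Z_{n-1}$, so the sub-cases reduce to both indices in Region I and the I/II boundary $m=n-1$; again the leading $Z$-chain is common to $W_m$ and $W_{m+1}$ and the two strings differ only on qubits $m$ and $m+1$. Collecting all sub-cases yields the structural fact for every $m$, hence the lemma.

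The main obstacle is organizational rather than conceptual: one must be careful at the region boundaries, where some $Z$-chains are empty and $W_m$, $W_{m+1}$ collapse to length-two or length-one Paulis, so the ``generic'' picture of a shared prefix degenerates. Verifying in those edge cases that the support is still contained in $\{m,m+1\}$ — and tracking the $\pm1,\pm i$ multiplicative factors from the Pauli products, which the Hermiticity observation ultimately forces to be $\pm1$ — is where the attention is needed; the non-boundary cases are routine qubitwise multiplication.
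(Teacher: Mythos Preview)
Your proposal is correct and follows essentially the same route as the paper: an exhaustive case split over the Region I/II/III structure (and the $k=n$ variant), verifying in each sub-case that the product $W_{m+1}W_m$ is supported on at most the two adjacent sites. Your framing is slightly cleaner in that you isolate the single structural fact (the two strings agree qubitwise outside $\{m,m+1\}$, so shared factors square to the identity) before descending into cases, whereas the paper simply multiplies out $W_{m+1}W_m$ explicitly in each case and reads off the two-body result; but the underlying verification is the same.
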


\begin{proof}

\

\subsection*{Case $k \neq n$}

\subsubsection*{Case 1: $m < k-1$}

Both $W_m$ and $W_{m+1}$ take the form of Region I.
\begin{align}
    W_m &= Y_m Z_{m+1} (Z_{m+2} \dots Z_{k-1} Y_k) \\
    W_{m+1} &=  Y_{m+1} (Z_{m+2} \dots Z_{k-1} Y_k)
\end{align}
Thus,
\begin{align}
    W_{m+1} W_m\\
    &= Y_m (Y_{m+1} Z_{m+1}) \\
    &= Y_m (i X_{m+1}).
\end{align}
Therefore,
\begin{align}
    G_m = i (i Y_m X_{m+1}) = - Y_m X_{m+1}.
\end{align}
This is a two-body operator acting only on sites $m$ and $m+1$.

\subsubsection*{Case 2: $m = k-1$}

Consider the product of $W_{k-1}$ (Region I) and $W_k$ (Region II).
\begin{align}
    W_{k-1} &= Y_{k-1} Y_k \\
    W_k &= Z_k
\end{align}
Thus,
\begin{align}
    W_k W_{k-1}\\
    &= Y_{k-1} (Z_k Y_k) \\
    &= Y_{k-1} (-i X_k).
\end{align}
Therefore,
\begin{align}
    G_{k-1} = i W_k W_{k-1} = i (-i Y_{k-1} X_k) = Y_{k-1} X_k.
\end{align}
This is a two-body operator acting on sites $k-1$ and $k$.

\subsubsection*{Case 3: $m = k$}

Consider the product of $W_k$ (Region II) and $W_{k+1}$ (Region III).
\begin{align}
    W_k &= Z_k \\
    W_{k+1} &= X_k X_{k+1}
\end{align}
Thus,
\begin{align}
    W_{k+1} W_k  \\
    &= (X_k Z_k) X_{k+1} \\
    &= (-i Y_k) X_{k+1}.
\end{align}
Therefore,
\begin{align}
    G_k = i W_{k+1} W_k = i (-i Y_k X_{k+1}) = Y_k X_{k+1}.
\end{align}
This is a two-body operator acting on sites $k$ and $k+1$.

\subsubsection*{Case 4: $m > k$}

Both $W_m$ and $W_{m+1}$ take the form of Region III.
\begin{align}
    W_m &= (X_k Z_{k+1} \dots Z_{m-1}) X_m \\
    W_{m+1} &= (X_k Z_{k+1} \dots Z_{m-1}) Z_m X_{m+1}
\end{align}
Thus,
\begin{align}
    W_{m+1} W_m  \\
    &= (Z_m X_m) X_{m+1} \\
    &= (i Y_m) X_{m+1}.
\end{align}
Therefore,
\begin{align}
    G_m = i (i Y_m X_{m+1}) = - Y_m X_{m+1}.
\end{align}
This is a two-body operator acting only on sites $m$ and $m+1$.

Moreover, for the specific boundary case:
\begin{align}
    W_{n-1} W_{n-2} 
    &= Z_{n-1}Z_{n-2}X_{n-2} \\
    &= (i Y_{n-2}) Z_{n-1}.
\end{align}
Therefore,
\begin{align}
    G_{n-2} = i (i Y_{n-2} Z_{n-1}) = - Y_{n-2} Z_{n-1}.
\end{align}
This is a two-body operator acting only on sites $n-2$ and $n-1$.

\subsection*{Case $k = n$}

\subsubsection*{Case 1: $m < n-1$}

Both $W_m$ and $W_{m+1}$ take the form of Region I.
\begin{align}
    W_m &=  -   Z_1Z_{2} \dots Z_{m-1}X_m \\
    W_{m+1} &=   - Z_1Z_{2} \dots Z_{m}X_{m+1}
\end{align}
Thus,
\begin{align}
    W_{m+1} W_m
    &= Z_m X_m X_{m+1} \\
    &= i Y_m X_{m+1}.
\end{align}
Therefore,
\begin{align}
    G_m = i (i Y_m X_{m+1}) = - Y_m X_{m+1}.
\end{align}
This is a two-body operator acting only on sites $m$ and $m+1$.

\subsubsection*{Case 2: $m = n-1$}

Consider the product of $W_{n-1}$ (Region I) and $W_n$ (Region II).
\begin{align}
    W_{n-1} &=  - Z_1Z_{2} \dots Z_{n-2}X_{n-1} \\
    W_n &= Z_1 \dots Z_{n-1}
\end{align}
Thus,
\begin{align}
    W_n W_{n-1}
    &=- Z_{n-1} X_{n-1} \\
    &= -i Y_{n-1}.
\end{align}
Therefore,
\begin{align}
    G_{n-1} = i W_n W_{n-1} =  Y_{n-1}.
\end{align}
This is a one-body operator acting on site $n-1$.

In all cases above ($1 \le m < n$), it has been shown that $G_m = i W_{m+1} W_m$ is a product of Pauli operators acting on at most two adjacent sites ($m, m+1$).

\end{proof}

\begin{lemma}[Locality of Operators]\label{lem:locality2}
    The left-side rotation $R_m \ (m < k)$ does not affect the terms $W_j \ (j > m+1)$ located in the right-side region. Similarly, the right-side rotations do not affect the left-side region.
\end{lemma}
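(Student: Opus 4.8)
The plan is to prove the slightly stronger statement that for \emph{every} index $j\notin\{m,m+1\}$ the rotation $R_m$ fixes the Pauli string $W_j$ under conjugation, $R_mW_jR_m^\dagger=W_j$; the lemma is then the special case in which $R_m$ is a left rotation ($m<k$) and $j>m+1$ lies in the right region, or $R_m$ is a right rotation ($k\le m<n$) and $j<k$ lies in the left region. Since $R_m=\exp(-i\theta_mG_m/2)$ depends on $G_m$ alone, it is enough to show $[G_m,W_j]=0$.

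For that I would use only two facts already in hand: the generators are $G_m=iW_{m+1}W_m$ for the left contraction and $G_m=iW_mW_{m+1}$ for the right contraction, and the strings $\{W_j\}$ are pairwise anticommuting (recorded just above the lemma). Fixing $j\notin\{m,m+1\}$, we have $W_mW_j=-W_jW_m$ and $W_{m+1}W_j=-W_jW_{m+1}$, so
\begin{align}
    G_mW_j = iW_{m+1}W_mW_j = -iW_{m+1}W_jW_m = iW_jW_{m+1}W_m = W_jG_m ,
\end{align}
and the right-contraction case is identical after exchanging $W_m$ and $W_{m+1}$. Hence $[G_m,W_j]=0$, and $W_j$ commutes with $R_m=\exp(-i\theta_mG_m/2)$, giving $R_mW_jR_m^\dagger=W_j$. (That $R_m$ is unitary is immediate, since $(iW_{m+1}W_m)^\dagger=-iW_mW_{m+1}=iW_{m+1}W_m$ is Hermitian.)

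Specialising, a left rotation $R_m$ with $m<k$ therefore fixes every $W_j$ with $j\neq m,m+1$ — in particular all of Region~III and, when $m\le k-2$, also the diagonal term $W_k$ and the as-yet-untouched Region~I strings — while a right rotation $R_m$ with $k\le m<n$ fixes every $W_j$ with $j\neq m,m+1$, in particular all Region~I strings $W_j$ with $j<k$. This is precisely the non-interference asserted by the lemma, and it is what legitimises the diagonalisation argument: $U_L$ contracts $\sum_{j\le k}c_jW_j$ to $\lambda_LW_k$ without perturbing $\sum_{j>k}c_jW_j$, after which $U_R$ contracts $\lambda_LW_k+\sum_{j>k}c_jW_j$ to $W_k=\mathcal{E}_k$.

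The only real ``obstacle'' is resisting a brute-force route. One could instead argue at the level of supports: by Lemma~\ref{lem:locality} each $G_m$ is Pauli-supported on the two adjacent sites $\{m,m+1\}$ (or a subset), so it either has support disjoint from $W_j$ (immediate commutation) or overlaps in one or two sites, where one checks that the number of anticommuting single-qubit letters is even; but this forces a case split over the three regions and over the boundary generators near site $n$ (for instance $G_k=Y_kX_{k+1}$ against a Region~I string, $G_{k-1}=Y_{k-1}X_k$ against a Region~III string, and the one-body $G_{n-1}=Y_{n-1}$ when $k=n$). The anticommutator identity above bypasses all of this and simultaneously makes transparent why only the two active indices $m,m+1$ can ever be disturbed.
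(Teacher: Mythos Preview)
Your proof is correct and follows essentially the same approach as the paper: both use the pairwise anticommutativity of the $W_j$ to show $[G_m,W_j]=0$ for $j\notin\{m,m+1\}$, whence $R_m$ fixes $W_j$. Your version is slightly more general (treating all $j\neq m,m+1$ at once and handling both left and right generators explicitly), but the key idea is identical.
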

\begin{proof}
    The generator for the left-side rotation is $G_m = i W_{m+1} W_m$.
    For any $j > m+1$, since $W_j$ anti-commutes with both $W_m$ and $W_{m+1}$, we have:
    \begin{align}
        [G_m, W_j] &= i [W_{m+1} W_m, W_j] \\
        &= i ( W_{m+1} W_m W_j - W_j W_{m+1} W_m ) = 0.
    \end{align}
    Thus, they commute. Therefore,
    \begin{align}
        R_m W_j R_m^\dagger = W_j \quad (\text{for } j > m+1).
    \end{align}
    This ensures that $U_L$ does not act on $W_{k+1}, \dots, W_n$, and $U_R$ does not act on $W_1, \dots, W_{k-1}$.
\end{proof}

\section{Proof of Circuit Correctness} \label{app:circuit_proof}

\begin{proof}
We show that the constructed $U$ satisfies $U O_k U^\dagger = W_k$.
First, from Lemma~\ref{lem:locality2}, $U_L$ does not act on Region III ($j > k$), and $U_R$ does not act on Region I ($j < k$). Therefore, we can consider the left and right contractions independently.
Furthermore, for two mutually anti-commuting operators $V_1, V_2$ that square to the identity, considering the rotation $R = \exp(-i \theta G/2)$ with the generator $G = i V_1 V_2$, we use the following relations throughout this proof:
\begin{align}
    R V_1 R^\dagger &= V_1 \cos\theta + V_2 \sin\theta \\
    R V_2 R^\dagger &= -V_1 \sin\theta + V_2 \cos\theta.
\end{align}

\subsubsection*{Contraction of the Left Region}

We show by mathematical induction that the partial sum on the left side $\sum_{j=1}^{k-1} c_j W_j$ is integrated into $W_k$ by the action of $U_L$.

First, we demonstrate by induction that it is contracted into $W_{k-1}$ by the unitary transformation $U'_{L} = R_{k-2} \dots R_1$.
Here, the generator of the rotation $R_m = \exp(-i \theta_m G_m/2)$ is $G_m = i W_{m+1} W_m$.

First, for the case $m=1$, the initial state is $c_1 W_1 + c_2 W_2 = \epsilon W_1 + \epsilon W_2$.
Using $\theta_1 = \arctan(c_1/c_2) = \arctan(1)$, we have:
\begin{align}
    &R_1 (\epsilon W_1 + \epsilon W_2) R_1^\dagger \\
    &= \epsilon (W_1 \cos\theta_1 + W_2 \sin\theta_1) + \epsilon (-W_1 \sin\theta_1 + W_2 \cos\theta_1) \\
    &= W_1 (\epsilon \cos\theta_1 - \epsilon \sin\theta_1) + W_2 (\epsilon \sin\theta_1 + \epsilon \cos\theta_1) \\
    &= \sqrt{2}\epsilon W_2.
\end{align}
Thus, after the rotation for $m=1$, the coefficients are contracted into $W_2$.

Next, assume that the sum $\sum_{j=1}^{m} c_j W_j$ has been contracted into $\sqrt{m}\epsilon W_m$ by the rotations up to step $m-1$.
Next, we apply $R_m$. The target terms are $\sqrt{m}\epsilon W_m + c_{m+1} W_{m+1}$ (note that $c_{m+1}=\epsilon$).
Using $\theta_m = \arctan(\sqrt{m})$, we obtain:
\begin{align}
    &R_m (\sqrt{m}\epsilon W_m + \epsilon W_{m+1}) R_m^\dagger \\
    &= \sqrt{m}\epsilon (W_m \cos\theta_m + W_{m+1} \sin\theta_m) \\
    &\quad + \epsilon (-W_m \sin\theta_m + W_{m+1} \cos\theta_m)\\
    &=\sqrt{m+1}\epsilon W_{m+1}.
\end{align}

Therefore, by repeating this until $m=k-2$, the sum on the left side is contracted into $W_{k-1}$ as follows:
\begin{align}
    U'_{L} \qty(\sum_{j=1}^{k-1} c_j W_j) (U'_{L})^\dagger = \sqrt{k-1}\epsilon W_{k-1}.
\end{align}

At this point, we have the contribution from the left $\nu_{k-1} W_{k-1}$ (where $\nu_{k-1} = \sqrt{k-1}\epsilon$) and the target term $c_k W_k$.
Applying the final left rotation $R_{k-1}$ with the generator $G_{k-1} = i W_k W_{k-1}$ and $\theta_{k-1}=\arctan(\nu_{k-1}/c_k)$, we obtain:
\begin{align}
    &R_{k-1} (\nu_{k-1} W_{k-1} + c_k W_k) R_{k-1}^\dagger \\
    &= W_{k-1} (\nu_{k-1} \cos\theta_{k-1}- c_k \sin\theta_{k-1})\\
    &\quad + W_k (\nu_{k-1} \sin\theta_{k-1} + c_k \cos\theta_{k-1})\\
    &=\sqrt{(k-1)\epsilon^2 + c_k^2}W_k.
\end{align}
Thus, all terms on the left side are integrated into $W_k$ (if $k=n$, the coefficient becomes $1$ at this step, completing the proof).

\subsubsection*{Contraction of the Right Region}
Similarly, we show that the partial sum on the right side $\sum_{j=k+1}^{n} c_j W_j$ is integrated into $W_k$ by the action of $U_R$.

We apply the rotations $R_m = \exp(-i \theta_m G_m/2)$ while decreasing $m$ from $n-1$ down to $k+1$.
For the contraction from the right side, we define the generator as $G_m = i W_m W_{m+1}$.

First, for the case $m=n-1$, the target is $\epsilon W_{n-1} + \epsilon W_n$. Using $\theta_{n-1} = \arctan(1)$,
\begin{align}
    R_{n-1} (\epsilon W_{n-1} + \epsilon W_n) R_{n-1}^\dagger = \sqrt{2}\epsilon W_{n-1}.
\end{align}
This transfers the component of $W_n$ to $W_{n-1}$.

Now, suppose that the sum from the right end has been contracted into $\sqrt{n-m}\epsilon W_{m+1}$ up to step $m+1$.
To proceed, we apply $R_m$. The target terms are $c_m W_m + \sqrt{n-m}\epsilon W_{m+1}$ (since $m > k$, $c_m=\epsilon$).
Using $\theta_m = \arctan(\sqrt{n-m})$,
\begin{align}
    R_m (\epsilon W_m + \sqrt{n-m}\epsilon W_{m+1}) R_m^\dagger &= W_m \sqrt{\epsilon^2 + (n-m)\epsilon^2} \\
    &= \sqrt{n-m+1}\epsilon W_m.
\end{align}
Repeating this until $m=k+1$, the sum on the right side is contracted into $W_{k+1}$ as follows:
\begin{align}
    U'_{R} \qty(\sum_{j=k+1}^{n} c_j W_j) (U'_{R})^\dagger = \sqrt{n-k}\epsilon W_{k+1}.
\end{align}

Finally, we integrate the result of the right contraction $\nu_{k+1} W_{k+1}$ (where $\nu_{k+1} = \sqrt{n-k}\epsilon$) into the central $W_k$.
We apply the rotation $R_k = \exp(-i \theta_k G_k/2)$ with the generator $G_k = i W_k W_{k+1}$.
The target operators are the term of $W_k$ obtained in Step 1 and the term of $W_{k+1}$ obtained in Step 2.
Using $\theta_k=\arctan\qty( \sqrt{n-k}\epsilon/\sqrt{(k-1)\epsilon^2 + c_k^2} )$,
\begin{align}
    &R_{k} \qty(\sqrt{(k-1)\epsilon^2 + c_k^2}  W_k + \sqrt{n-k}\epsilon  W_{k+1}) R_{k}^\dagger \\
    &= \sqrt{ \qty(\sqrt{(k-1)\epsilon^2 + c_k^2})^2 + \qty(\sqrt{n-k}\epsilon)^2 }W_k\\
    &=W_k.
\end{align}
Therefore, we finally obtain:
\begin{align}
    U O_k U^\dagger = 1 \cdot W_k = \mathcal{E}_k.
\end{align}

\end{proof}

\section{Efficient Encoding Circuit Construction}
\label{app:encode}

In this appendix, we present a concrete construction of the quantum circuit that generates the encoded state $\ket{\psi_y}$ for an odd-parity input $y \in O_n$ in the $(n, n-1)$-QRACs. We further prove that the circuit depth and gate count scale as $O(n^2)$ with respect to $n$.

The encoding procedure consists of two stages: (1) generation of an input-independent reference state $\ket{\psi_{\text{ref}}}$, and (2) a displacement operation using Pauli operators dependent on the input $y$.

\subsection{Algebraic Structure and Translational Symmetry}

Before proceeding to the specific circuit construction, we discuss the geometric symmetry of the encoded states in the $(n, n-1)$-QRACs.
In this subsection, we show that the matrix $A_n$ possesses specific covariance properties under the action of the Pauli group.

First, each encoded state $\ket{\psi_y}$ satisfies an orthogonality relation defined by $A_n$ with respect to its neighborhood set $N(y)$.
We define the bit-flip operator $\mathcal{X}(u)$ on the $n$-qubit Hilbert space, corresponding to the operation of shifting the input $y$ to $y' = y \oplus u$ (where $u \in \{0, 1\}^n$), as follows:
\begin{align}
    \mathcal{X}(u) = \bigotimes_{i=1}^{n} X_i^{u_i}.
\end{align}
Consider the commutation relation between this operator $\mathcal{X}(u)$ and the matrix $A_n = \sum g_l$.
Calculating the conjugation for each term $g_l = Z^{\otimes (l-1)} \otimes X \otimes I_{n-l}$, we obtain:
\begin{align}
    \mathcal{X}(u) g_l \mathcal{X}(u)^\dagger = (-1)^{\alpha_l(u)} g_l,
\end{align}
where
\begin{align}
    \alpha_l(u) \equiv \sum_{j=1}^{l-1} u_j \pmod 2.
\end{align}

To correct this sign disturbance, we similarly consider a phase operator $\mathcal{Z}(v) = \bigotimes_{i=1}^n Z_i^{v_i}$ on the $n$-qubit space. When $\mathcal{Z}(v)$ acts on $g_l$, a new sign flip occurs due to the anti-commutation relation with the $X$ operator at the $l$-th position of $g_l$:
\begin{align}
    \mathcal{Z}(v) g_l \mathcal{Z}(v)^\dagger = (-1)^{v_l} g_l.
\end{align}
Therefore, by defining $\mathcal{D}(u, v) = \mathcal{Z}(v) \mathcal{X}(u)$ and choosing $v$ to satisfy the condition $v_l = \alpha_l(u)$, we can cancel the sign changes for all $l$:
\begin{align}
    \mathcal{D}(u, v) A_n \mathcal{D}(u, v)^\dagger &= \sum_{l=1}^n (-1)^{\alpha_l(u) + v_l} g_l  \\
    &= \sum_{l=1}^n g_l = A_n.
\end{align}
The equation $v_l = \alpha_l(u)$ always has a solution.
Thus, by setting $v_l = \alpha_l(u)$ using an $(n-1)$-bit correction string $v \in \{0, 1\}^{n-1}$, $A_n$ remains invariant over the entire space.

\begin{theorem}\label{th:ref_state}
    The encoded state $\ket{\psi_y}$ for any input $y$ is unitarily equivalent to a fixed reference state $\ket{\psi_{\text{ref}}}$ up to a global phase:
    \begin{align}
        \ket{\psi_y} =(-1)^{v \cdot y}  D(u, v') \ket{\psi_{\text{ref}}},
    \end{align}
    where $D(u, v) =Z(v) X(u)$, with $X(u) = \bigotimes_{i=1}^{n-1} X_i^{u_i}$ and $Z(v) = \bigotimes_{i=1}^{n-1} Z_i^{v_i}$. In addition, $v'_i$ is defined as $v'_i=v_i+v_n$.
\end{theorem}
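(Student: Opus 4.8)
The plan is to deduce this from the $A_n$-covariance of the Pauli group established immediately above, together with the identification between the $n$-th logical bit and the global parity of the $n-1$ physical qubits. Concretely, I fix any reference string $y_0 \in O_n$, set $\ket{\psi_{\text{ref}}} := \ket{\psi_{y_0}}$, and for a given $y \in O_n$ write $y \oplus y_0 \in E_n$ in the form $(u, u_n)$ with $u \in \{0,1\}^{n-1}$; the correction string $v$ is the one from the previous subsection, $v_l \equiv \sum_{j<l}(y\oplus y_0)_j \pmod 2$, and $v'_i := v_i + v_n$. The global sign in the claim will emerge as the exponent $v\cdot y$.

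First I would put $\ket{\psi_y}$ into an explicit computational-basis form. For $y \in O_n$ every neighbor $y\oplus e_l$ lies in $E_n$, so $\ket{\psi_{y\oplus e_l}} = \ket{(y\oplus e_l)'}$ with $(\,\cdot\,)'$ denoting the first $n-1$ bits, and Lemma~\ref{lem:An_transition} gives $(A_n)_{y,\,y\oplus e_l} = (-1)^{\beta_l(y)}$ with $\beta_l(y) := \sum_{s<l}y_s$. Hence
\begin{align}
    \ket{\psi_y} = \frac{1}{\sqrt n}\sum_{l=1}^n (-1)^{\beta_l(y)}\ket{(y\oplus e_l)'}.
\end{align}
Writing the same expansion for $y_0$ and applying $D(u,v') = Z(v')X(u)$, the bit-flip part carries $\ket{(y_0\oplus e_l)'} \mapsto \ket{(y_0\oplus e_l)'\oplus u} = \ket{(y\oplus e_l)'}$ (using $y' = y_0' \oplus u$), and $Z(v')$ then contributes the phase $(-1)^{v'\cdot (y\oplus e_l)'}$ to the $l$-th term. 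Thus $D(u,v')\ket{\psi_{\text{ref}}}$ is supported on exactly the same basis states as $\ket{\psi_y}$, with $l$-th coefficient sign $(-1)^{\beta_l(y_0) + v'\cdot(y\oplus e_l)'}$.

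It then remains to verify the congruence $\beta_l(y_0) + v'\cdot(y\oplus e_l)' \equiv \beta_l(y) + v\cdot y \pmod 2$ for all $l$, which forces $D(u,v')\ket{\psi_{\text{ref}}} = (-1)^{v\cdot y}\ket{\psi_y}$. Using $\beta_l(y)+\beta_l(y_0) \equiv \sum_{s<l}(y\oplus y_0)_s = v_l$, this reduces to $v'\cdot(y\oplus e_l)' \equiv v_l + v\cdot y \pmod 2$. I would first settle the case $l=n$: there $(y\oplus e_n)' = y'$, the identity reads $v'\cdot y' \equiv v_n + v\cdot y$, and expanding $v'_i = v_i + v_n$ together with the odd parity of $y$ in the form $\sum_{i\le n-1}y_i \equiv 1 + y_n$ reproduces it exactly. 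For $l \le n-1$ one has $(y\oplus e_l)' = y'\oplus (e_l)'$ with $(e_l)'$ the $(n-1)$-bit unit vector at position $l$, so the left-hand side shifts by $v'_l$ relative to its $l=n$ value and the identity collapses to the defining relation $v'_l = v_l + v_n$. The one genuinely delicate point is precisely this sign bookkeeping — the passage from the $n$-bit combinatorics to the $(n-1)$-qubit operators, where $Z_n$ on the encoded subspace must be read as $Z^{\otimes(n-1)}$ and the odd parity of $y$ is consumed — everything else being the routine substitution above. Assembling the pieces yields $\ket{\psi_y} = (-1)^{v\cdot y}\,D(u,v')\ket{\psi_{\text{ref}}}$, as claimed.
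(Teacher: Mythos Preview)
Your proof is correct but takes a different route from the paper's. The paper works at the operator level: it uses the covariance $\mathcal{D}(u,v)\,A_n\,\mathcal{D}(u,v)^\dagger = A_n$ established just before the theorem, writes $\ket{\psi_y}=\tfrac{1}{\sqrt n}\sum_{x\in E_n}\bra{y}A_n\ket{x}\ket{\psi_x}$, replaces $A_n$ by $\mathcal{D} A_n\mathcal{D}^\dagger$, and then reduces the action of $\mathcal{D}^\dagger$ on the $n$-th coordinate to the phase factor $\prod_i(-1)^{v_n x_i}$ via the even-parity constraint, which is exactly what produces the shift $v\mapsto v'$. Your argument bypasses this machinery entirely: you expand both $\ket{\psi_y}$ and $D(u,v')\ket{\psi_{\text{ref}}}$ term by term in the computational basis using the explicit matrix elements of $A_n$ from Lemma~\ref{lem:An_transition}, and then verify the phase identity $\beta_l(y_0)+v'\cdot(y\oplus e_l)'\equiv\beta_l(y)+v\cdot y$ directly. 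The paper's approach makes the structural role of the Pauli covariance explicit and would generalize more readily to other sign matrices satisfying an analogous invariance; your approach is more elementary, needs only Lemma~\ref{lem:An_transition}, and makes completely transparent where each of the three ingredients (the definition of $v$, the definition of $v'$, and the odd parity of $y$) is consumed. Both are clean; yours is arguably the shorter path once Lemma~\ref{lem:An_transition} is in hand.
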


\begin{proof}
First, for a target input $y$, we define the bit strings $u, v$ constituting the displacement operator $\mathcal{D}(u,v)$ as follows:
\begin{align}
    u &:= y \oplus y_{\text{ref}}, \\
    v_l &:= \alpha_l(u) \quad (\text{for all } l).
\end{align}
From this definition of $u$, $\mathcal{D}(u,v)$ acts to map between $y$ and $y_{\text{ref}}$. Indeed, applying $\mathcal{D}^\dagger(u,v)$ to the computational basis $\ket{y}=\ket{y_1\dots y_n}$ yields:
\begin{align}
    \mathcal{D}^\dagger(u,v) \ket{y} &= \mathcal{X}(u) \mathcal{Z}(v) \ket{y}  \\
    &= (-1)^{v \cdot y} \mathcal{X}(u) \ket{y}  \\
    &= (-1)^{v \cdot y} \ket{y \oplus u}  \\
    &= (-1)^{v \cdot y}\ket{y_{\text{ref}}}.
\end{align}
This shows that $\mathcal{D}^\dagger(u,v)$ maps to the reference basis $\ket{y_{\text{ref}}}$ up to a global phase.

Using this, we have:
\begin{align}
    \ket{\psi_y} &= \frac{1}{\sqrt{n}} \sum_{x \in E_n} \bra{y} A_n \ket{x} \ket{\psi_x}  \\
    &= \frac{1}{\sqrt{n}} \sum_{x \in E_n} \bra{y} \mathcal{D} (u,v)A_n \mathcal{D}^\dagger(u,v) \ket{x} \ket{\psi_x}  \\
    &= (-1)^{v \cdot y} \frac{1}{\sqrt{n}} \sum_{x \in E_n} \bra{y_{\text{ref}}} A_n \mathcal{D}^\dagger(u,v) \ket{x} \ket{\psi_x}.
\end{align}
\begin{widetext}
Inserting the completeness relation $I_{n-1} = \sum_{x' \in E_n} \ket{x'}\bra{x'}+\sum_{y' \in O_n} \ket{y'}\bra{y'}$, we get:
\begin{align}
    \ket{\psi_y} &= (-1)^{v \cdot y} \frac{1}{\sqrt{n}} \sum_{x \in E_n} \sum_{x' \in E_n} \bra{y_{\text{ref}}} A_n \ket{x'} \bra{x'} \mathcal{D}^\dagger (u,v)\ket{x} \ket{\psi_x}  \\
    &= (-1)^{v \cdot y} \frac{1}{\sqrt{n}} \sum_{x' \in E_n} \bra{y_{\text{ref}}} A_n \ket{x'} \left( \sum_{x \in E_n} \bra{x'} \mathcal{D}^\dagger(u,v) \ket{x} \ket{\psi_x} \right) .
\end{align}    
\end{widetext}
Due to the even parity condition $x_n = \bigoplus_{i=1}^{n-1} x_i$, we have:
\begin{align}
    \bra{x'_n} \mathcal{D}_n^\dagger \ket{x_n} &= \bra{x'_n} X_n^{u_n} Z_n^{v_n} \ket{x_n}  \\
    &= (-1)^{v_n x_n} \delta_{x'_n, x_n \oplus u_n}  \\
    &= \left( \prod_{i=1}^{n-1} (-1)^{v_n x_i} \right).
\end{align}
Therefore,
\begin{align}
    &\sum_{x \in E_n} \bra{x'} \mathcal{D}^\dagger(u,v) \ket{x} \ket{\psi_x}\\
    &=\sum_{x \in E_n}\bra{\psi_{x'}} D^\dagger(u,v) \ket{\psi_{x}}\left( \prod_{i=1}^{n-1} (-1)^{v_n x_i} \right)  \ket{\psi_x}\\
    &=\sum_{x \in E_n}\bigotimes_{i=1}^{n-1} Z_i^{v_n}\ket{\psi_x}\bra{\psi_{x}} D(u,v) \ket{\psi_{x'}}\\
    &=\bigotimes_{i=1}^{n-1} Z_i^{v_n} D(u,v) \ket{\psi_{x'}}\\
    &= D(u,v') \ket{\psi_{x'}},
\end{align}
where we set $v'_i=v_i+v_n$. Consequently,
\begin{align}
    \ket{\psi_y} 
    &= (-1)^{v \cdot y} D(u,v') \frac{1}{\sqrt{n}} \sum_{x' \in E_n} (A_n)_{y_{\text{ref}},x'} \ket{\psi_{x'}}\\
    &= (-1)^{v \cdot y} D(u,v')\ket{\psi_{y_{\text{ref}}}}.
\end{align}

\end{proof}

\subsection{Reference State Preparation}

First, consider the encoded state $\ket{\psi_{\text{ref}}}$ corresponding to the reference input $y_{\text{ref}} = (0, 0, \dots, 0, 1) \in O_n$. In the system of $m=n-1$ qubits, this state corresponds to a uniform superposition of the all-zero basis $\ket{0}^{\otimes m}$ and all weight-1 basis states $\ket{e_k}$ ($k=1,\dots,m$), where $\ket{e_k}$ is the state with $\ket{1}$ only at the $k$-th qubit.
\begin{align}
    \ket{\psi_{\text{ref}}} = \frac{1}{\sqrt{n}} \qty( \ket{0}^{\otimes m} + \sum_{k=1}^{m} \ket{e_k} ).
\end{align}
This state can be generated from the initial state $\ket{0}^{\otimes m}$ by sequentially applying $m$ rotation gates as follows.

In each step $k=1, \dots, m$, we apply a controlled rotation gate $C^{k-1}R_y(\theta_k)$ to qubit $q_k$.
This is a multi-controlled rotation gate with the control condition ``all upper bits $q_1, \dots, q_{k-1}$ are $\ket{0}$".
The rotation angle $\theta_k$ is set to satisfy:
\begin{align}
    \theta_k = 2 \arctan \qty( \frac{1}{\sqrt{n - k}} ).
\end{align}
The correctness of this recursive construction can be verified by observing the state transition at each specific step. Let the initial state be $\ket{\Psi_0} = \ket{0}^{\otimes m}$.

\begin{itemize}
    \item $k=1$: \\
    Apply $R_y(\theta_1)$ to the first qubit $q_1$, where $\sin(\theta_1/2) = 1/\sqrt{n}$.
    \begin{align}
        \ket{\Psi_1} &= \qty( \frac{1}{\sqrt{n}}\ket{1}_1 + \sqrt{\frac{n-1}{n}}\ket{0}_1 ) \otimes \ket{0\dots0}_{2\dots m}  \\
        &= \frac{1}{\sqrt{n}}\ket{10\dots0} + \sqrt{\frac{n-1}{n}}\ket{00\dots0}.
    \end{align}
    This fixes the amplitude of $\ket{10\dots0}$ to the desired value $1/\sqrt{n}$.

    \item $k=2$: \\
    Apply $R_y(\theta_2)$ to the second qubit $q_2$ under the control condition ``$q_1=\ket{0}$".
    The first term of $\ket{\Psi_1}$ ($q_1=\ket{1}$) remains unchanged, and only the second term ($q_1=\ket{0}$) undergoes the operation.
    Since $\sin(\theta_2/2) = 1/\sqrt{n-1}$,
    \begin{widetext}
    \begin{align}
        \ket{\Psi_2} &= \frac{1}{\sqrt{n}}\ket{10\dots0} + \sqrt{\frac{n-1}{n}}\ket{0}_1 \otimes \qty( \frac{1}{\sqrt{n-1}}\ket{1}_2 + \sqrt{\frac{n-2}{n-1}}\ket{0}_2 ) \otimes \ket{00\dots 0}_{3\dots m}  \\
        &= \frac{1}{\sqrt{n}}\ket{10\dots0} + \frac{1}{\sqrt{n}}\ket{010\dots0} + \sqrt{\frac{n-2}{n}}\ket{00\dots0}.
    \end{align}
    \end{widetext}
\end{itemize}
Thus, at each step $k$, a new term $1/\sqrt{n}\ket{e_k}$ is generated. By repeating this until the end, we finally obtain the desired uniform superposition state.

\subsection{State Displacement via Pauli Operations}

From Theorem~\ref{th:ref_state}, the encoded state $\ket{\psi_y}$ for any input $y \in O_n$ can be generated by applying Pauli operators $X$ and $Z$ to the reference state $\ket{\psi_{\text{ref}}}$.
\begin{align}
    \ket{\psi_y} &= (-1)^{v \cdot y} D(u,v')\ket{\psi_{\text{ref}}}\\
    &=(-1)^{v \cdot y} \qty( \bigotimes_{j=1}^{n-1} Z_j^{v_j+v_n} ) \qty( \bigotimes_{j=1}^{n-1} X_j^{u_j} ) \ket{\psi_{\text{ref}}},
\end{align}
where the bit string $u \in \{0, 1\}^m$ is determined based on the difference between the input $y$ and the reference input $y_{\text{ref}}$, serving to shift the center position of the superposition in the space. The bit string $v \in \{0, 1\}^m$ acts as a phase factor to correct the sign changes associated with this shift and is efficiently computable classically.

\subsection{Complexity Analysis}

We estimate the computational cost of the above construction. For the generation of the reference state, it is known from previous studies that $C^{k-1}R_y(\theta_k)$ can be implemented with a depth of $O(k)$ even on a 1D chain architecture~\cite{zindorf2025multi}. Therefore, the total CNOT count and depth scale as $O(n^2)$.
On the other hand, for the displacement operation, the application of Pauli operators $X, Z$ involves single-qubit operations performed independently on each qubit, resulting in a circuit depth of $O(1)$.
In conclusion, the overall circuit depth and gate count for encoding scale as $O(n^2)$, demonstrating that exponential resources are not required.

\bibliography{ref.bib}

\end{document}